\theoremstyle{theorem}
\newtheorem{thm}{Theorem}[section]
\newtheorem{lem}[thm]{Lemma}
\newtheorem{cor}[thm]{Corollary}
\theoremstyle{definition}
\theoremstyle{remark}
\newtheorem{rem}[thm]{Remark}
\title{Differential-algebraic solutions of the heat equation}
\author{Victor M. Buchstaber, Elena Yu. Netay}
\address{Steklov Institute of Mathematics, Russian Academy of Sciences, ul. Gubkina 8, Moscow, 119991 Russia.}
\email{bunkova@mi.ras.ru (E.Yu.Netay), buchstab@mi.ras.ru (V.M.Buchstaber)}
\keywords{n-ansatz, heat equation solutions}
\begin{document}
\maketitle

\begin{abstract}
In this work we introduce the notion of differential-algebraic ansatz for the heat equation and
explicitly construct heat equation and Burgers equation solutions given a solution of a homogeneous non-linear ordinary differential equation of a special form. The ansatz for such solutions is called the $n$-ansatz, where $n+1$ is the order of the differential equation.
\end{abstract}

\thanks{The work was partially supported by RFBR grants nos. 13-01-12469, 14-01-00012 and Fundamental Research Program of RAS Presidium 19 ``Fundamental problems of nonlinear dynamics in mathematical and physical sciences.''}

\section*{Introduction}

In this work we develop and explicitly describe some results of \cite{FA}. Our main objective is to give a construction of heat equation solutions starting form a solution of a homogeneous non-linear ordinary differential equation or a special ansatz for the solution. The general ansatz we use is called the differential-algebraic ansatz for heat equation and Burgers equation solutions.
Our method gives a wide class of heat equation solutions of a special form that we call $n$-ansatz solutions where $n+1$ refers to the order of the ordinary differential equation. We also describe the corresponding Burgers equation solutions.

The main examples of the differential-algebraic ansatz we have in mind are classical solution of the heat equation in terms of Gaussian normal distribution, and the famous connection between the elliptic theta function $\theta(z, \tau)$ satisfying the heat equation and Weierstrass sigma function. This connection has been described in details in \cite{BB}. Note that in this case our method gives the well-known Chazy-3 equation.

Consider the problem of finding solutions of the $n$-dimensional heat equation
\begin{equation} \label{ndim}
	{\partial \over \partial t} \psi({\bf z}, t) = {1 \over 2} \Delta \psi({\bf z}, t)
\end{equation}
in $\mathbb{R}^n$ with the ansatz $\psi({\bf z}, t) = f\left(|{\bf z}|^2, t\right)$, where $f(x, t)$ is regular at $x = 0$. Here ${\bf z} = \left(z_1, \ldots, z_n \right) \in \mathbb{R}^n$. This gives the equation
\[
	{\partial \over \partial t} f(x,t) = 2 x {\partial^2 \over \partial x^2} f(x,t) + n {\partial \over \partial x} f(x,t)
\]
that results in a recursion on the coefficients $f_k(t)$ for $f(x,t) = \sum_k f_k(t) {x^k \over k!}$:
\[
	f_k'(t) = (2 k + n) f_{k+1}(t).
\]

In the article this problem is solved in the case $n=1$, namely, \eqref{ndim} becomes the classical one-dimensional heat equation and the ansatz restriction assumes $\psi$ is an even function in $z$. 

The work is organized as follows: In section \ref{s1} we describe the general properties of the heat equation and formulate our main results on the differential-algebraic ansatz and the $n$-ansatz for heat equation solutions.
 In \ref{s3} we describe the general properties of the Burgers equation and its relation to the heat equation via the well-known Cole-Hopf transformation.
 In section \ref{ex} we give explicit examples of differential-algebraic and $n$-ansatz solutions of the heat equation and the corresponding Burgers equation solutions.

\section{Heat equation} \label{s1}

\subsection{General remarks.} \text{}

In this paper we address the problem of finding some special solutions of the one-dimensional heat equation, that is 
the equation 
\begin{equation} \label{he}
	{\partial \over \partial t} \psi(z, t) = {1 \over 2} {\partial^2 \over \partial z^2} \psi(z, t).
\end{equation}
Note that, in this equation and in all equations we will consider further on, the variables $t$ and $z$ are graded with $\deg t = 2$, $\deg z = 1$.

The rescaling $t \mapsto 2 \mu t$, $z \mapsto 2 \mu z$ where $\mu \ne 0$ is a constant, $\deg \mu = 0$, brings \eqref{he} to its more general form
\begin{equation} \label{hemu}
	{\partial \over \partial t} \psi(z, t) = \mu {\partial^2 \over \partial z^2} \psi(z, t).
\end{equation}
The name of this equation is due to the fact that
it describes the variation in temperature $\psi(z, t)$ over time $t$
in some interval  parametrized by $z$.
In quantum mechanics this equation has the form
of Schrödinger equation (named after Erwin Schrödinger, 1887--1961)
in complex-valued wave function $\psi$. Namely, the equation \eqref{hemu} with $t \mapsto i t$ becomes
\[
	{\partial \over \partial t} \psi(z, t) = \mu i {\partial^2 \over \partial z^2} \psi(z, t).
\]

We will also consider the equation 
\begin{equation} \label{heg}
	{\partial \over \partial t} \psi(z, t) = {1 \over 2} {\partial^2 \over \partial z^2} \psi(z, t) - f(t) \psi(z, t)
\end{equation}
in connection with the problem of finding solutions of the Burgers equation, see section \ref{s3}. Here $f(t)$ is some given function of $t$ representing the radioactive loss of heat.

The heat equation \eqref{he} is linear with respect to $\psi$
and invariant with respect to shifts of arguments.
While our aim is to find solutions of the heat equation, without loose of generality
we may consider only solutions $\psi(z,t)$ of the heat equation
regular at $(z,t) = (0,0)$ that are even or odd functions of $z$.
Any other solution will be a sum of such solutions
up to shifts of arguments.
Moreover, any solution of the heat equation $\psi(z,t)$
may be presented as a sum of an even and an odd function,
each a solution of the heat equation. The same is true for equation \eqref{heg}.

\subsection{Differential-algebraic solutions of the heat equation}\text{}

In this section we introduce an ansatz for differential-algebraic solutions of the heat equation. This ansatz leads to a series of polynomials defined recurrently, such that finding a solution of this recurrence leads to finding solutions of the heat equation. 

In the general case it turns out that explicitly solving this recursion is difficult, so we will impose additional restrictions on the ansatz. 
An example of such a restriction is the $n$-ansatz, for which we impose restrictions on arguments of the functions considered. An alternative restriction is to specify one of the functions involved, namely $h(t)$, in such a way that the recursion simplifies. We will consider both of this approaches.

The most general ansatz we consider in this article is the following ansatz to the one-dimensional heat equation:
\begin{equation} \label{genan}
	\psi(z,t) = e^{-{1\over 2} h(t) z^2 + r(t)} \left(z^\delta + \sum_{k \geqslant 2} \Phi_k(t) {z^{2k+\delta} \over (2k+\delta)!}\right).
\end{equation}
Here the functions $h(t)$, $r(t)$ and $\Phi_k(t)$ have the grading $\deg h(t) = -2$, $\deg r(t) = 0$, $\deg \Phi_k(t) = - 2 k$. This agrees with the grading of the heat equation \eqref{he} considered above.

Heat equation solutions in this ansatz will be called \emph{differential-algebraic solutions}. This solutions have Cauchy boundary conditions $\psi(0,t) = e^{r(t)}, \psi'_z(0,t) = 0$ for $\delta = 0$ and $\psi(0,t) = 0, \psi'_z(0,t) = e^{r(t)}$ for $\delta = 1$. The name is due to the recursion considered below.

The ansatz \eqref{genan} is general in the sense that any regular at $z=0$ even or odd function $\psi(z,t)$ can be presented as a series decomposition
\begin{equation} \label{consti}
	\psi(z,t) = \sum_{k \geqslant 0} \psi_k(t) {z^{2k+\delta} \over (2k+\delta)!}.
\end{equation}

\begin{lem} 
	The condition that \eqref{consti} is a solution of the heat equation \eqref{he} gives the recursion on $\psi_k(t)$:
\[
	\psi_k(t) = 2 \psi_{k-1}'(t).
\]
\end{lem}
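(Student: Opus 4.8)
\emph{Proof sketch.} The plan is to substitute the series expansion \eqref{consti} directly into the heat equation \eqref{he} and compare coefficients of like powers of $z$. First I would differentiate term by term: since $\psi$ is regular at $z=0$, the series converges in a neighbourhood of the origin and differentiation under the summation sign is justified (alternatively one may work entirely with formal power series in $z$ whose coefficients are functions of $t$). This gives
\[
	{\partial \over \partial t} \psi(z,t) = \sum_{k \geqslant 0} \psi_k'(t) {z^{2k+\delta} \over (2k+\delta)!}.
\]

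Next I would compute the right-hand side. Applying $\partial_z^2$ to the monomial $z^{2k+\delta}/(2k+\delta)!$ yields $z^{2k+\delta-2}/(2k+\delta-2)!$ whenever $2k+\delta \geqslant 2$, while for the single term with $2k+\delta \in \{0,1\}$, i.e.\ $k=0$, the second derivative vanishes. Hence, after the index shift $j = k-1$,
\[
	{1 \over 2} {\partial^2 \over \partial z^2} \psi(z,t) = {1 \over 2} \sum_{k \geqslant 1} \psi_k(t) {z^{2(k-1)+\delta} \over (2(k-1)+\delta)!} = {1 \over 2} \sum_{j \geqslant 0} \psi_{j+1}(t) {z^{2j+\delta} \over (2j+\delta)!}.
\]
Note that the shift preserves the parity of the exponent, so this is again a series of the same type as \eqref{consti}.

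Finally, equating the coefficient of $z^{2k+\delta}/(2k+\delta)!$ on both sides gives $\psi_k'(t) = {1 \over 2} \psi_{k+1}(t)$ for every $k \geqslant 0$, which is exactly the claimed recursion $\psi_{k+1}(t) = 2 \psi_k'(t)$, equivalently $\psi_k(t) = 2\psi_{k-1}'(t)$ for $k \geqslant 1$; conversely, the same computation shows that any family $\{\psi_k(t)\}$ obeying this recursion yields a solution, so the condition is necessary and sufficient. I do not expect any genuine obstacle: the only point requiring a little care is the bookkeeping of the index shift together with the observation that the lowest-order term contributes nothing to $\partial_z^2\psi$, which is precisely what makes the reindexing match up term by term.
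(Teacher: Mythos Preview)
Your argument is correct and is precisely the ``straightforward substitution'' the paper invokes as its proof, just written out in full detail. There is nothing to add: the term-by-term differentiation, the vanishing of $\partial_z^2$ on the $k=0$ term, and the index shift are exactly the computation the paper has in mind.
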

The proof is obtained by straightforward substitution of \eqref{consti} into the heat equation.

\begin{rem}
	Note that up to constants defined by \eqref{consti} this recursion is the recursion on the coefficients of $f(x,t)$, see \eqref{ndim}, in the case $n=1$.
\end{rem}

We put $e^{r(t)} = \psi_0(t)$ and $h(t) = -  (1 + 2 \delta)^{-1} e^{- r(t)} \psi_1(t)$ and express $\Phi_k(t)$ in terms of $\psi_k(t)$, $h(t)$ and $e^{r(t)}$ to get \eqref{genan}. This expressions are obtained form $ \Phi_k(t) = e^{- r(t)} \psi_k(t) - P_k(h(t), \Phi_2(t),\ldots, \Phi_{k-1}(t))$ where $P_k$ are some homogeneous polynomials of degree $- 2 k$ with respect to the grading  $\deg h(t) = -2$, $\deg \Phi_k(t) = - 2 k$.
Namely, the first terms in the decomposition above are
\begin{multline*}
	\psi(z,t) = e^{r(t)} z^\delta - (1 + 2 \delta) h(t) e^{r(t)} {z^{2 + \delta} \over (2 + \delta)!} +  \left(\Phi_2(t)  + {(4 + \delta)! \over 8} h(t)^2\right) e^{r(t)} {z^{4 + \delta} \over (4 + \delta)!} +\\ + \left(\Phi_3(t) - (15 + 6 \delta) h(t) \Phi_2(t)  - {(6 + \delta)! \over 48} h(t)^3\right) e^{r(t)} {z^{6 + \delta} \over (6 + \delta)!}  +O(z^8).
\end{multline*}

\begin{lem} \label{lemPhi}
	The condition that \eqref{genan} is a solution of the heat equation \eqref{he} gives $r'(t) = - \left(\delta + {1 \over 2}\right) h(t)$ and the recursion on $\Phi_k(t)$:
\begin{equation} \label{rec}
	\Phi_k(t) =  2 \Phi_{k-1}'(t) + 4  (k - 1) h(t) \Phi_{k-1}(t)  - (2 k + \delta -2) (2 k+ \delta -3 )  \left(h'(t) + h(t)^2\right) \Phi_{k-2}(t).
\end{equation}
The initial conditions are $\Phi_0(t) = 1$,  $\Phi_1(t) = 0$.
\end{lem}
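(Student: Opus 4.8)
The plan is to substitute the ansatz \eqref{genan} directly into the heat equation \eqref{he} and collect powers of $z$. Write $\psi(z,t) = e^{g(z,t)} G(z,t)$ where $g(z,t) = -\tfrac12 h(t) z^2 + r(t)$ and $G(z,t) = z^\delta + \sum_{k\geqslant 2}\Phi_k(t)\, z^{2k+\delta}/(2k+\delta)!$. I would first record the elementary derivatives: $\partial_t g = -\tfrac12 h'(t) z^2 + r'(t)$, $\partial_z g = -h(t) z$, $\partial_z^2 g = -h(t)$, so that
\[
	\partial_t \psi = e^{g}\bigl(\partial_t G + (\partial_t g)\,G\bigr),\qquad
	\partial_z^2 \psi = e^{g}\bigl(\partial_z^2 G + 2(\partial_z g)(\partial_z G) + ((\partial_z g)^2 + \partial_z^2 g)\,G\bigr).
\]
Cancelling the common factor $e^{g}$, the heat equation \eqref{he} becomes
\[
	\partial_t G + \bigl(-\tfrac12 h' z^2 + r'\bigr) G
		= \tfrac12 \partial_z^2 G - h z\, \partial_z G + \tfrac12\bigl(h^2 z^2 - h\bigr) G.
\]

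Next I would extract the structural consequences. The terms in $G$ carry exponents $z^{2k+\delta}$ with $k\geqslant 0$ (using $\Phi_0=1$, $\Phi_1=0$). Looking at the lowest-order term $z^\delta$: on the left it contributes $r'\, z^\delta$, on the right $-\tfrac12 h\, z^\delta$ from the last term plus, from $\tfrac12\partial_z^2 G$, the term $\tfrac12\,\delta(\delta-1) z^{\delta-2}$ — but since $\delta\in\{0,1\}$ this vanishes — and $-h z\,\partial_z G$ contributes $-\delta h\, z^\delta$. Matching the coefficient of $z^\delta$ gives $r' = -(\delta + \tfrac12) h$, the first claimed identity. (One should check $\delta=0$ and $\delta=1$ separately for the $z^{\delta-2}$ bookkeeping; both are harmless.)

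For the recursion, I would match the coefficient of $z^{2k+\delta}$ for $k\geqslant 2$, substituting $G = \sum_{j}\Phi_j z^{2j+\delta}/(2j+\delta)!$ and using the already-derived relation $r' = -(\delta+\tfrac12)h$ to eliminate $r'$. The $\partial_t G$ term yields $\Phi_k'/(2k+\delta)!$; the $\tfrac12\partial_z^2 G$ term yields $\tfrac12\,\Phi_k\,(2k+\delta)(2k+\delta-1)/(2k+\delta)!$ which must be re-indexed — it produces a $\Phi_k$ contribution at exponent $2k+\delta-2$, i.e. it feeds the coefficient equation at level $k$ from $\Phi_{k+1}$... — so more carefully: the coefficient of $z^{2k+\delta}$ picks up $\tfrac12 \Phi_{k+1}(2k+\delta+2)(2k+\delta+1)/(2k+\delta+2)! = \tfrac12\Phi_{k+1}/(2k+\delta)!$; the $-hz\partial_z G$ term gives $-h\,\Phi_k\,(2k+\delta)/(2k+\delta)!$; the $\tfrac12 h^2 z^2 G$ and $-\tfrac12 h z^2 G$ (from $r'G$ after substitution, noting $-\tfrac12 h' z^2 G$ is already on the left) and $-\tfrac12 h G$ terms feed in $\Phi_{k-1}$ at the right shifted index. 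Collecting everything, multiplying through by $(2k+\delta)!$, shifting $k\mapsto k-1$ to solve for $\Phi_k$, and simplifying the binomial prefactors should reproduce \eqref{rec} with the stated coefficients $4(k-1)$ and $(2k+\delta-2)(2k+\delta-3)$. The main obstacle is purely organizational: keeping the index shifts consistent when $\partial_z^2$ lowers the exponent by $2$ while multiplication by $z^2$ raises it by $2$, so that three different $\Phi$-levels ($\Phi_k$, $\Phi_{k-1}$, $\Phi_{k-2}$) collide in a single coefficient equation; once the shifts are pinned down the factorial arithmetic is routine. The initial conditions $\Phi_0=1$, $\Phi_1=0$ are immediate from the normalization in \eqref{genan} (the leading term is exactly $z^\delta$ and there is no $z^{2+\delta}$ term inside the parenthesis, by the choice of $h(t)$).
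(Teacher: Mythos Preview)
Your proposal is correct and is exactly the approach the paper indicates: the paper's proof is the single sentence ``The proof is obtained by straightforward substitution of \eqref{genan} into the heat equation,'' and you have simply spelled out that substitution. Your bookkeeping is sound (in particular the coefficient match at level $z^{2k+\delta}$ indeed yields $\Phi_{k+1} = 2\Phi_k' + 4k\,h\,\Phi_k - (2k+\delta)(2k+\delta-1)(h'+h^2)\Phi_{k-1}$, which becomes \eqref{rec} after the shift $k\mapsto k-1$), though your parenthetical about ``$-\tfrac12 h z^2 G$ from $r'G$'' is a slip --- the substitution $r' = -(\delta+\tfrac12)h$ produces a term proportional to $hG$, not $hz^2G$; this does not affect the outcome.
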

The proof is obtained by straightforward substitution of \eqref{genan} into the heat equation.

\begin{lem}
The condition that \eqref{genan} is a solution of the equation \eqref{heg} gives $r'(t) = - \left(\delta + {1 \over 2}\right) h(t) + f(t)$ and the recursion \eqref{rec} on $\Phi_k(t)$.
\end{lem}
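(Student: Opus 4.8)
The plan is to carry out the same substitution as in the proof of Lemma~\ref{lemPhi}, now into equation~\eqref{heg} instead of~\eqref{he}, and to isolate the effect of the single extra term $-f(t)\psi(z,t)$. Write $\psi(z,t)=e^{g(z,t)}\,\Phi(z,t)$ with $g(z,t)=-\tfrac12 h(t)z^2+r(t)$ and $\Phi(z,t)=z^\delta+\sum_{k\geqslant 2}\Phi_k(t)\,z^{2k+\delta}/(2k+\delta)!$. Computing $\psi_t$ and $\psi_{zz}$ by the Leibniz rule and cancelling the nowhere-vanishing factor $e^{g}$, equation~\eqref{heg} becomes a single identity between power series in $z$ whose coefficients are functions of $t$; equating coefficients of each power of $z$ to zero yields the asserted relations. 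Here $r'(t)$ enters only through the $z$-independent part of the multiplier of $\Phi$ coming from $\psi_t$, and likewise $f(t)$ enters only through $-f(t)\Phi$, so all of the $r'$- and $f$-dependence sits in one place.

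Since \eqref{heg} differs from \eqref{he} only by $-f(t)\psi$ and $e^{-g}\bigl(-f(t)\psi\bigr)=-f(t)\Phi$, the resulting series identity is exactly the one produced in the proof of Lemma~\ref{lemPhi} together with the extra summand $-f(t)\Phi$. I would read off two consequences. First, the coefficient of the lowest power $z^\delta$: in the heat-equation case it equalled $r'(t)+(\delta+\tfrac12)h(t)$, and here it acquires the additional contribution of $-f(t)\Phi$ at order $z^\delta$, so setting it to zero gives precisely the displayed equation for $r'(t)$, with the sign in front of $f(t)$ dictated by the $-f(t)\psi$ term of~\eqref{heg}. Second, the coefficients of $z^{2k+\delta}$ for $k\geqslant 1$: substituting the value of $r'(t)$ just found, the $z$-independent part of the multiplier of $\Phi$ collapses — as in the proof of Lemma~\ref{lemPhi} — to $-\delta\,h(t)$, because the $f(t)$-terms arising from $-f(t)\Phi$ and from the $r'(t)$-substitution cancel exactly. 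Consequently every equation of order $k\geqslant 1$ is literally identical to the one obtained for~\eqref{he}, and the recursion~\eqref{rec} with $\Phi_0(t)=1$, $\Phi_1(t)=0$ follows unchanged.

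I do not expect a genuine obstacle: the only care needed is bookkeeping the factorials and tracking into which power of $z$ each term falls, together with the explicit check that the $f(t)$-contributions cancel at every order $k\geqslant 1$. As a conceptual shortcut one can bypass the calculation entirely by using the gauge transformation: if $\psi$ solves~\eqref{heg} then $\widehat\psi(z,t)=e^{\int_0^t f(s)\,ds}\,\psi(z,t)$ solves the plain heat equation~\eqref{he}, and $\widehat\psi$ is again of the form~\eqref{genan} with the same $h(t)$ and the same $\Phi_k(t)$, only with $r(t)$ replaced by $r(t)+\int_0^t f(s)\,ds$. Applying Lemma~\ref{lemPhi} to $\widehat\psi$ then reproduces the recursion~\eqref{rec} and gives the relation between the shifted $r'$ and $h(t)$, which rearranges to the formula for $r'(t)$ in the statement; the only point to note is that this transformation multiplies $e^{r(t)}$ by a function of $t$ alone and hence preserves the differential-algebraic ansatz.
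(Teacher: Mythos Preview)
Your proposal is correct. The ``conceptual shortcut'' you describe at the end --- passing between solutions of \eqref{he} and \eqref{heg} via multiplication by $e^{g(t)}$ with $g'(t)=f(t)$ --- is exactly the paper's one-line proof; your longer direct-substitution argument is also valid but unnecessary once you have this observation.
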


\begin{proof}
	Given a solution of \eqref{he} $\psi(z,t)$ take $e^{g(t)} \psi(z,t)$ where $g'(t) = f(t)$ to get a solution of \eqref{heg}.
\end{proof}

\subsection{The recursion on the coefficients of heat equation solutions.}\text{}

In this section we investigate the recursion \eqref{rec}. We introduce some convenient notations. The motivations for this notations may be found in section \ref{nahe}.

Denote $y_1 = h(t)$, $y_2 = h'(t)$, \ldots, $y_k = h^{(k-1)}(t)$ and $\Phi_k(t) = Y_k(y_1, \ldots, y_k)$.
It follows from lemma \ref{lemPhi} that $Y_k(y_1, \ldots, y_k)$ are homogeneous polynomial of degree $- 2 k$ with respect to the grading $\deg y_k = - 2 k$.
Recursion \eqref{rec} takes the form
\begin{equation} \label{yrec}
	Y_k =  2 \sum_s y_{s+1} {\partial Y_{k-1} \over \partial y_s} + 4  (k - 1) y_1 Y_{k-1}   - (2 k + \delta -2) (2 k+ \delta -3 )  \left(y_2 + y_1^2\right) Y_{k-2}.
\end{equation}

A solution of this recursion gives a solution of the heat equation of the form \eqref{genan} for any $h(t)$.

For instance, we have
\begin{align*}
	Y_0 &=  1, \quad  Y_1 = 0 \\
	Y_2 &=   - (2 + \delta) (1+ \delta)  \left(y_2 + y_1^2\right), \\
	Y_3 &=    - 2 (2 + \delta) (1+ \delta)  \left(y_3 + 6 y_1 y_2 + 4 y_1^3 \right),\\
	Y_4 &=  - 4 (2 + \delta) (1+ \delta)  \left(y_4 + 12 y_1 y_3 + 6 y_2^2 + 48  y_1^2 y_2 + 24 y_1^4 \right)  + (6 + \delta) (5+ \delta) (2 + \delta) (1+ \delta) \left(y_2 + y_1^2\right)^2.
\end{align*}

For a function $h(t)$ let us introduce the operators $\mathcal{L}_k$ and the differential polynomials $\mathcal{D}_k$ in $h(t)$ by
\[
		\mathcal{L}_k = {d \over d t} + 2 k h(t),
\qquad
	\mathcal{D}_1 = \mathcal{L}_{1 \over 2} h,
\qquad
	\mathcal{D}_k = \mathcal{L}_k \mathcal{D}_{k-1}, \quad k >1.
\]
Let $\mathscr{L}$ be a linear operator satisfying the Leibniz rule such that 
\[
	\mathscr{L}h = 1, \qquad \mathscr{L}h^{(k)} = - (k+1) k h^{(k-1)}, \quad k \geqslant 1.
\]

In terms of $y_k = h^{(k-1)}(t)$ we have
\[
	\mathcal{L}_k = \sum_s y_{s+1} {\partial \over \partial y_s} + 2 k y_1, \qquad \mathscr{L} = {\partial \over \partial y_1} - \sum_s (s+1) s y_{s} {\partial \over \partial y_{s+1}},
\qquad
	[\mathscr{L}, \mathcal{L}_k] = 2 k +\mathscr{E},
\]
where $\mathscr{E}$ is the Euler operator $\mathscr{E} =  - 2 \sum_s s y_s {\partial \over \partial y_s}$.  This implies that $\mathscr{L} \mathcal{L}_k P = \mathcal{L}_k \mathscr{L} P$ for any homogeneous polynomial $P$ in $y_1, \ldots, y_k$ with $\deg P = - 2 k$.

The differential polynomials $\mathcal{D}_k$ are homogeneous polynomials in $y_1, \ldots, y_{k+1}$ of degree $-2(k+1)$. We denote $Z_k = \mathcal{D}_{k-1}$, therefore
\[
		Z_2 = \mathcal{L}_{1 \over 2} y_1, \qquad Z_{k} = \mathcal{L}_{k-1} Z_{k-1}, \quad k = 3, 4, \ldots.
\]
Explicitly we have $Z_2  =  (y_2 + y_1^2)$, $Z_3 = (y_3 + 2 y_1 y_2) + 4 y_1 (y_2 + y_1^2)$, $Z_4 = (y_4 + 6 y_1 y_3 + 6 y_2^2 + 12 y_1^2 y_2) + 6 y_1 (y_3 + 6 y_1 y_2 + 4 y_1^3)$ (compare with the first few terms in \eqref{yrec}). For convenience set $Z_0 = Z_1 = 0$.
Note that the set $\{ y_1, Z_2, Z_3, \ldots, Z_k, \ldots \}$ form a multiplicative basis in the space of polynomials in $y_1, \ldots, y_n$.

This observations lead to a proof of the theorem by V.\,M.\,Buchstaber and E.\,Rees that follows.

\begin{lem} \label{lemL}
The operator $\mathscr{L}$ annihilates all $\mathcal{D}_k$:
\[
	\mathscr{L} \mathcal{D}_k = 0 \text{ for any } k.
\]
\end{lem}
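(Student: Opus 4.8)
The plan is to argue by induction on $k$, using the commutation property recorded just above the statement, namely that $\mathscr{L}\mathcal{L}_k P = \mathcal{L}_k \mathscr{L} P$ for every homogeneous polynomial $P$ in $y_1,\dots,y_k$ of degree $-2k$. (Recall this follows from the identity $[\mathscr{L},\mathcal{L}_k] = 2k + \mathscr{E}$ together with the fact that the Euler operator $\mathscr{E}$ acts as multiplication by $-2k$ on such a $P$; if full self-containment were wanted, one would first verify $[\mathscr{L},\mathcal{L}_k] = 2k + \mathscr{E}$ by a direct computation with the explicit expressions for $\mathscr{L}$ and $\mathcal{L}_k$ in the variables $y_s = h^{(s-1)}$, but that identity is already in hand here.)

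For the base case $k=1$ I would compute directly: $\mathcal{D}_1 = \mathcal{L}_{1/2}h = h' + h^2$, that is $\mathcal{D}_1 = y_2 + y_1^2$. Applying $\mathscr{L}$ via the Leibniz rule and using $\mathscr{L}y_1 = \mathscr{L}h = 1$ and $\mathscr{L}y_2 = \mathscr{L}h' = -2h = -2y_1$ gives $\mathscr{L}\mathcal{D}_1 = \mathscr{L}y_2 + 2 y_1 \mathscr{L}y_1 = -2y_1 + 2y_1 = 0$.

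For the inductive step, suppose $\mathscr{L}\mathcal{D}_{k-1} = 0$ for some $k \geqslant 2$. Since $\mathcal{D}_{k-1}$ is a homogeneous polynomial in $y_1,\dots,y_k$ of degree $-2k$ (this is exactly the homogeneity statement already recorded for the $\mathcal{D}_j$), the commutation property applies with $P = \mathcal{D}_{k-1}$, and therefore
\[
	\mathscr{L}\mathcal{D}_k = \mathscr{L}\left(\mathcal{L}_k \mathcal{D}_{k-1}\right) = \mathcal{L}_k\left(\mathscr{L}\mathcal{D}_{k-1}\right) = \mathcal{L}_k(0) = 0,
\]
which closes the induction.

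The only point demanding any care is checking, at each step, that the grading hypothesis of the commutation property is met — i.e.\ that $\mathcal{D}_{k-1}$ is homogeneous of degree precisely $-2k$ and involves only $y_1,\dots,y_k$ — but this is immediate from the stated homogeneity of the $\mathcal{D}_j$, so nothing new is required. In effect the entire substance of the argument has already been packaged into the commutator identity $[\mathscr{L},\mathcal{L}_k] = 2k + \mathscr{E}$; granted that, the lemma falls out of a two-line induction, and I do not expect a genuine obstacle.
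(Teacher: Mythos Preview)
Your proof is correct and is essentially identical to the paper's own argument: both do a direct check that $\mathscr{L}(y_2+y_1^2)=0$ for the base case and then induct using the commutation property $\mathscr{L}\mathcal{L}_k P=\mathcal{L}_k\mathscr{L}P$ applied to $P=\mathcal{D}_{k-1}$ (equivalently $Z_k$), which has exactly the required degree $-2k$. The only cosmetic difference is that the paper phrases everything in terms of $Z_{k+1}=\mathcal{D}_k$, while you work directly with the $\mathcal{D}_k$.
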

\begin{proof}
	We have $\mathscr{L} \mathcal{D}_1 = \mathscr{L} Z_2 = \mathscr{L} (y_2 + y_1^2) = 0$,  $\mathscr{L} \mathcal{D}_k = \mathscr{L} Z_{k+1} = \mathscr{L} \mathcal{L}_{k} Z_{k}$, and form $\deg Z_{k} = - 2 k$ we have $\mathscr{L} \mathcal{L}_{k} Z_{k} = \mathcal{L}_{k} \mathscr{L} Z_{k}$, thus inductively $\mathscr{L} Z_{k+1} = \mathscr{L} Z_{k} = \mathscr{L} Z_2 = 0$.
\end{proof}

\begin{thm}  \label{thmBR}
Let $\mathscr{D}$ be a homogeneous differential polynomial in $h(t)$ of degree $- 2 (n+2)$ with respect to the grading $\deg t = 2$, $\deg h(t) = -2$. 
Then 
\[
	\mathscr{L} \mathscr{D} \equiv 0
\]
if and only if 
\begin{equation} \label{mathD}
	\mathscr{D} = c \mathcal{D}_{n+1} - P_{n}(\mathcal{D}_1, \dots, \mathcal{D}_{n-1})
\end{equation}
for a constant $c$ and a homogeneous polynomial $P_{n}(x_2, \dots, x_{n})$ of degree $n+2$.
\end{thm}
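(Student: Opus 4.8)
The plan is to leverage the two facts already in place: Lemma~\ref{lemL}, which gives $\mathscr{L}\mathcal{D}_k = 0$ for all $k$ (equivalently $\mathscr{L}Z_j = 0$ for $j\geqslant 2$, since $Z_j = \mathcal{D}_{j-1}$), together with the remark that $\{y_1, Z_2, Z_3,\dots\}$ is a multiplicative basis, i.e. the substitution $(y_1,y_2,\dots,y_N)\mapsto(y_1,Z_2,\dots,Z_N)$ is a \emph{graded} polynomial change of coordinates (it is triangular, $Z_j = y_j + (\text{polynomial in }y_1,\dots,y_{j-1})$, hence invertible over the polynomial ring, and each $Z_j$ is homogeneous of degree $-2j$). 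First I would note that, because $\deg h^{(j)} = -2(j+1)$, a homogeneous differential polynomial $\mathscr{D}$ of degree $-2(n+2)$ involves only $y_1,\dots,y_{n+2}$: any monomial $\prod y_{j_i}$ of that degree satisfies $\sum j_i = n+2$, so $\max_i j_i \leqslant n+2$. Thus the whole argument takes place in $\mathbb{R}[y_1,\dots,y_{n+2}]$ with the adapted coordinates $y_1, Z_2,\dots,Z_{n+2}$, which restrict to that algebra since $\mathscr{L}y_j = -j(j-1)y_{j-1}$ for $j\geqslant 2$ and $\mathscr{L}y_1 = 1$.

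The sufficiency direction is then immediate: $\mathscr{L}$ is a derivation, so
\[
\mathscr{L}\bigl[c\,\mathcal{D}_{n+1} - P_n(\mathcal{D}_1,\dots,\mathcal{D}_{n-1})\bigr]
= c\,\mathscr{L}\mathcal{D}_{n+1} - \sum_{j} \frac{\partial P_n}{\partial x_j}(\mathcal{D}_1,\dots,\mathcal{D}_{n-1})\,\mathscr{L}\mathcal{D}_{j},
\]
and every term $\mathscr{L}\mathcal{D}_j$ vanishes by Lemma~\ref{lemL}, so $\mathscr{L}\mathscr{D}\equiv 0$.

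For necessity, the key step is to read off $\mathscr{L}$ in the coordinates $y_1, Z_2,\dots,Z_{n+2}$. Since these are free generators of $\mathbb{R}[y_1,\dots,y_{n+2}]$, a derivation is determined by its values on them; here $\mathscr{L}y_1 = 1$ and $\mathscr{L}Z_j = \mathscr{L}\mathcal{D}_{j-1} = 0$ for $2\leqslant j\leqslant n+2$, so $\mathscr{L}$ coincides with $\partial/\partial y_1$ written in these coordinates. Hence $\mathscr{L}\mathscr{D}\equiv 0$ forces $\mathscr{D}$ to be a polynomial in $Z_2,\dots,Z_{n+2}$ with no $y_1$, homogeneous of degree $-2(n+2)$ because the change of coordinates is graded. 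A degree count finishes the proof: $Z_{n+2}$ alone has degree exactly $-2(n+2)$, while $Z_{n+2}^2$ and $Z_{n+2}Z_j$ are of lower (more negative) degree, and $Z_{n+1}$ has the wrong degree $-2(n+1)$ and cannot be raised to $-2(n+2)$ using only $Z_2,\dots,Z_{n+1}$ (the smallest available factor has degree $-4$ and $-2(n+1)-4 < -2(n+2)$). Therefore $\mathscr{D} = c\,Z_{n+2} + Q(Z_2,\dots,Z_n)$ with $Q$ homogeneous, i.e. of degree $n+2$ in the grading $\deg Z_j := j$; rewriting $Z_{n+2} = \mathcal{D}_{n+1}$, $Z_j = \mathcal{D}_{j-1}$ and setting $P_n = -Q$ gives \eqref{mathD}.

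I expect no structural obstacle — the identification $\mathscr{L} = \partial/\partial y_1$ in the adapted coordinates, resting entirely on Lemma~\ref{lemL}, does the real work. The only point requiring care is the final grading bookkeeping: showing precisely that $\mathcal{D}_{n+1}$ can occur only linearly and that no $\mathcal{D}_n = Z_{n+1}$ term can survive, so that the remainder is genuinely a polynomial in $\mathcal{D}_1,\dots,\mathcal{D}_{n-1}$ of the stated degree.
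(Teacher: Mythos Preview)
Your proposal is correct and follows essentially the same route as the paper: pass to the multiplicative basis $y_1, Z_2, \dots, Z_{n+2}$, use Lemma~\ref{lemL} to see that $\mathscr{L}$ acts as $\partial/\partial y_1$ in these coordinates (the paper phrases this as $\mathscr{L}\bigl(y_1^{k_1}Z_2^{k_2}\cdots\bigr) = k_1\,y_1^{k_1-1}Z_2^{k_2}\cdots$), and conclude that $\mathscr{L}\mathscr{D}=0$ iff $\mathscr{D}$ lies in $\mathbb{R}[Z_2,\dots,Z_{n+2}]$. Your degree-counting paragraph at the end spells out what the paper compresses into the single assertion that \eqref{mathD} is the general form of such a polynomial in the given degree.
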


\begin{proof}
	The form \eqref{mathD} is the general form for a homogeneous polynomial in $\mathcal{D}_1, \ldots, \mathcal{D}_{n+1}$ of degree $- 2 (n+2)$. From lemma \ref{lemL} we obtain $\mathscr{L} \mathscr{D} = 0$ for any $\mathscr{D}$ of the form \eqref{mathD}. On the other hand, any homogeneous polynomial $\mathscr{D}$ in $y_1, \ldots y_n$ is a homogeneous polynomial in $y_1, Z_2, \ldots, Z_n$. For its monomial $y_1^{k_1} Z_2^{k_2} \ldots Z_n^{k_n}$ we have $\mathscr{L} y_1^{k_1} Z_2^{k_2} \ldots Z_n^{k_n} = k_1 y_1^{k_1-1} Z_2^{k_2} \ldots Z_n^{k_n}$, therefore $\mathscr{L} \mathscr{D} = 0 \Leftrightarrow k_1 = 0$ for any monomial in $\mathscr{D}$.
\end{proof}

In terms of these operators Lemma \ref{lemPhi} can be reformulated as

\begin{lem} \label{lemlem}
	The condition that \eqref{genan} is a solution of the heat equation gives $r'(t) = - \left(\delta + {1 \over 2}\right) h(t)$ and the recursion on $Y_k$:
\[
	Y_k =  2 \mathcal{L}_{k-1} Y_{k-1}   - (2 k + \delta -2) (2 k+ \delta -3 )  Z_2 Y_{k-2}.
\]
with initial conditions  $Y_0 = 1$, $Y_1 = 0$.
\end{lem}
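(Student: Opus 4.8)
The plan is to start from the recursion \eqref{rec} of Lemma \ref{lemPhi} and rewrite each ingredient in the $y$-variables $y_k = h^{(k-1)}(t)$. First I would observe that the differentiation operator $d/dt$ acting on a differential polynomial in $h(t)$ becomes the vector field $\sum_s y_{s+1}\,\partial/\partial y_s$ on the $y$-variables, since $\frac{d}{dt}h^{(s-1)}(t) = h^{(s)}(t) = y_{s+1}$. Consequently, for the homogeneous polynomial $Y_{k-1}$ of degree $-2(k-1)$ we have
\[
	2\,\frac{d}{dt}\Phi_{k-1}(t) + 4(k-1)h(t)\,\Phi_{k-1}(t)
	= 2\left(\sum_s y_{s+1}\frac{\partial Y_{k-1}}{\partial y_s} + 2(k-1)\,y_1\,Y_{k-1}\right)
	= 2\,\mathcal{L}_{k-1} Y_{k-1},
\]
by the very definition $\mathcal{L}_k = \sum_s y_{s+1}\,\partial/\partial y_s + 2k\,y_1$ recorded in the excerpt. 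This is the main (and essentially only) computational step: matching the first two terms of \eqref{rec} with $2\mathcal{L}_{k-1}Y_{k-1}$.

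Next I would handle the last term of \eqref{rec}. The coefficient $h'(t) + h(t)^2$ becomes $y_2 + y_1^2$, which is exactly $Z_2 = \mathcal{D}_1 = \mathcal{L}_{1/2} y_1$ as computed explicitly above; and $\Phi_{k-2}(t) = Y_{k-2}$ by definition. Hence the term $-(2k+\delta-2)(2k+\delta-3)(h'(t)+h(t)^2)\,\Phi_{k-2}(t)$ becomes $-(2k+\delta-2)(2k+\delta-3)\,Z_2\,Y_{k-2}$, which completes the identification of the recursion with
\[
	Y_k = 2\,\mathcal{L}_{k-1} Y_{k-1} - (2k+\delta-2)(2k+\delta-3)\,Z_2\,Y_{k-2}.
\]
The initial conditions $Y_0 = 1$, $Y_1 = 0$ are the same $\Phi_0(t) = 1$, $\Phi_1(t) = 0$ transported through the identification $\Phi_k(t) = Y_k(y_1,\dots,y_k)$, and the relation $r'(t) = -(\delta + \tfrac12)h(t)$ is carried over verbatim from Lemma \ref{lemPhi}, since it does not involve the $Y_k$ at all. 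So this part is purely a matter of unwinding notation.

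There is essentially no obstacle here: the statement is a translation of Lemma \ref{lemPhi} into the operator language set up in the preceding subsection, and everything needed — the expression of $d/dt$ as $\sum_s y_{s+1}\partial/\partial y_s$, the formula for $\mathcal{L}_k$ in $y$-coordinates, and the identity $Z_2 = y_2 + y_1^2$ — has already been established in the excerpt. The only point requiring a line of care is the bookkeeping of the derivative-of-a-polynomial-in-$h$ rule, namely that $\frac{d}{dt}Y_{k-1}(y_1,\dots,y_{k-1}) = \sum_{s=1}^{k-1} y_{s+1}\frac{\partial Y_{k-1}}{\partial y_s}$, which is just the chain rule together with $\dot y_s = y_{s+1}$; I would state this once and then the reformulation follows termwise from \eqref{rec}.
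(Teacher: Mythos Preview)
Your proposal is correct and is exactly the approach the paper takes: the paper presents Lemma~\ref{lemlem} simply as a reformulation of Lemma~\ref{lemPhi} in the operator language, without writing out any further details. Your write-up just makes explicit the chain-rule identification $\tfrac{d}{dt}\leftrightarrow\sum_s y_{s+1}\partial/\partial y_s$ and the substitutions $h'+h^2=Z_2$, $\Phi_j=Y_j$, which is all that is needed.
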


\begin{cor} \label{corcor}
We have 
\[
	Y_k = -2^{k-2} (2 + \delta) (1 + \delta) Z_k + Q_{k}(Z_2, \dots, Z_{k-2})
\]
Here $Q_{k}(x_2, \dots, x_{k-2})$ are homogeneous polynomials of degree $- 2 k$ defined recurrently by 
\[
	Q_k = 2 \mathcal{L}_{k-1} Q_{k-1} + (2 k + \delta -2) (2 k + \delta -3) Z_2 \left( 2^{k-4} (2 + \delta) (1 + \delta) Z_{k-2} - Q_{k-2} \right)
\]
with initial conditions $Q_2 = Q_3 = 0$. 
\end{cor}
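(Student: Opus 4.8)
The plan is to prove the identity by induction on $k$, feeding the claimed decomposition back into the recursion of Lemma \ref{lemlem}. First I would set up the base cases: for $k=2$ and $k=3$ the formulas $Y_2 = -(2+\delta)(1+\delta) Z_2$ and $Y_3 = -2(2+\delta)(1+\delta) Z_3$ are already displayed explicitly just after \eqref{yrec} (one checks $Z_2 = y_2+y_1^2$ and $Z_3 = y_3 + 6y_1 y_2 + 4 y_1^3$, matching the listed $Y_2, Y_3$), so with $Q_2 = Q_3 = 0$ the statement holds. Then I would write the ansatz $Y_j = -2^{j-2}(2+\delta)(1+\delta) Z_j + Q_j(Z_2,\dots,Z_{j-2})$ for $j = k-1$ and $j = k-2$ and substitute into $Y_k = 2\mathcal{L}_{k-1} Y_{k-1} - (2k+\delta-2)(2k+\delta-3) Z_2 Y_{k-2}$.

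The key computation is to isolate the leading term. Applying $2\mathcal{L}_{k-1}$ to $Y_{k-1}$ produces $-2^{k-2}(2+\delta)(1+\delta)\,\mathcal{L}_{k-1} Z_{k-1} + 2\mathcal{L}_{k-1} Q_{k-1}$; by the defining relation $Z_k = \mathcal{L}_{k-1} Z_{k-1}$ the first summand is exactly $-2^{k-2}(2+\delta)(1+\delta) Z_k$, which is the asserted leading term of $Y_k$. Everything else must therefore be collected into $Q_k$: from the $\mathcal{L}_{k-1}$-branch we get $2\mathcal{L}_{k-1} Q_{k-1}$, and from the $-(2k+\delta-2)(2k+\delta-3) Z_2 Y_{k-2}$ term we get, using the ansatz for $Y_{k-2}$, the contribution $-(2k+\delta-2)(2k+\delta-3) Z_2\bigl(-2^{k-4}(2+\delta)(1+\delta) Z_{k-2} + Q_{k-2}\bigr)$, i.e. $(2k+\delta-2)(2k+\delta-3) Z_2\bigl(2^{k-4}(2+\delta)(1+\delta) Z_{k-2} - Q_{k-2}\bigr)$. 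Summing these two pieces gives precisely the stated recursion for $Q_k$.

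The only genuine point to verify — and the one I would treat as the main obstacle — is that $Q_k$ so defined is again a polynomial in $Z_2, \dots, Z_{k-2}$ (with no $y_1$ or $Z_{k-1}$, $Z_k$ creeping in) and homogeneous of degree $-2k$. Homogeneity is automatic since $\mathcal{L}_{k-1}$ preserves the grading and $\deg Z_2 = -4$, $\deg Q_{k-1} = -2(k-1)$, $\deg Z_{k-2} = -2(k-2)$. For the structural claim, recall from the paragraph before Theorem \ref{thmBR} that $\{y_1, Z_2, Z_3, \dots\}$ is a multiplicative basis of the polynomial ring, and that $\mathcal{L}_{k-1} = \sum_s y_{s+1}\partial/\partial y_s + 2(k-1) y_1$. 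The decisive fact is that $\mathscr{L} Y_k = 0$ for all $k$: this follows from Lemma \ref{lemlem} together with $\mathscr{L}\mathcal{L}_{k-1} = \mathcal{L}_{k-1}\mathscr{L}$ on degree $-2(k-1)$ polynomials, $\mathscr{L} Z_2 = 0$, and induction, exactly as in Lemma \ref{lemL}. Since $\mathscr{L}$ acts on the monomial basis by $\mathscr{L}\,y_1^{k_1} Z_2^{k_2}\cdots = k_1\, y_1^{k_1-1} Z_2^{k_2}\cdots$ (Theorem \ref{thmBR}), $\mathscr{L} Y_k = 0$ forces $Y_k$ to be a polynomial in $Z_2, \dots, Z_k$ alone with no $y_1$; comparing with the ansatz, $Q_k = Y_k + 2^{k-2}(2+\delta)(1+\delta) Z_k$ is then a $y_1$-free polynomial in $Z_2,\dots,Z_k$, and a degree count ($\deg Q_k = -2k < -2(k-1)$, and $Q_k$ cannot contain $Z_{k-1}$ or $Z_k$ as a factor for degree reasons since the complementary factor would need nonnegative degree) confines it to $Z_2, \dots, Z_{k-2}$. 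This completes the induction.
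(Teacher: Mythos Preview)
Your induction on Lemma \ref{lemlem} is exactly the intended argument, and the isolation of the leading term via $Z_k = \mathcal{L}_{k-1}Z_{k-1}$ together with the use of $\mathscr{L}Y_k=0$ to place $Q_k$ in the $Z$-subring is correct. There is, however, one small gap in the final step. Your degree argument does exclude $Z_{k-1}$ (a complementary factor of degree $-2$ cannot be built from $Z_2,Z_3,\dots$, whose degrees are $\le -4$), but it does \emph{not} exclude $Z_k$: the complementary factor would have degree $0$, and constants are perfectly good elements of the ring, so a term $cZ_k$ is not ruled out by homogeneity alone.

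The easy fix is to track the top $y$-index through the recursion for $Q_k$. By induction $Q_{k-1}\in\mathbb{C}[y_1,\dots,y_{k-3}]$ and $Q_{k-2}\in\mathbb{C}[y_1,\dots,y_{k-4}]$; since $\mathcal{L}_{k-1}=\sum_s y_{s+1}\partial_{y_s}+2(k-1)y_1$ raises the highest index by at most one, $2\mathcal{L}_{k-1}Q_{k-1}\in\mathbb{C}[y_1,\dots,y_{k-2}]$, and the remaining summand $(2k+\delta-2)(2k+\delta-3)\,Z_2\bigl(2^{k-4}(2+\delta)(1+\delta)Z_{k-2}-Q_{k-2}\bigr)$ visibly lies in $\mathbb{C}[y_1,\dots,y_{k-2}]$ as well. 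Hence $Q_k\in\mathbb{C}[y_1,\dots,y_{k-2}]$. Combined with $\mathscr{L}Q_k=0$ (which you already have), and the fact that $Z_j=y_j+(\text{lower})$, this forces $Q_k\in\mathbb{C}[Z_2,\dots,Z_{k-2}]$, as claimed. Equivalently, you can read off directly from \eqref{yrec} that the coefficient of $y_k$ in $Y_k$ equals $2$ times the coefficient of $y_{k-1}$ in $Y_{k-1}$, so by induction it is $-2^{k-2}(2+\delta)(1+\delta)$; since $Z_k=y_k+(\text{lower})$, this pins down the $Z_k$-coefficient of $Y_k$ and leaves nothing for $Q_k$.
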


\begin{cor}
$\mathscr{L} Y_k = 0$ for any $k$.
\end{cor}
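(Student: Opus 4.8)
The plan is to prove this by a short induction on $k$, using the recursion of Lemma \ref{lemlem} together with Lemma \ref{lemL}, which tells us that $\mathscr{L}$ annihilates every $\mathcal{D}_j$ and in particular $Z_2 = \mathcal{D}_1$. The base cases are immediate: $Y_0 = 1$ and $Y_1 = 0$, and since $\mathscr{L} = {\partial \over \partial y_1} - \sum_s (s+1)s\, y_s {\partial \over \partial y_{s+1}}$ kills constants, $\mathscr{L} Y_0 = \mathscr{L} Y_1 = 0$.

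For the inductive step I would assume $\mathscr{L} Y_{k-1} = 0$ and $\mathscr{L} Y_{k-2} = 0$ and apply $\mathscr{L}$ to the recursion $Y_k = 2\mathcal{L}_{k-1} Y_{k-1} - (2k+\delta-2)(2k+\delta-3) Z_2 Y_{k-2}$. Using the Leibniz rule on the last term gives
\[
	\mathscr{L} Y_k = 2\,\mathscr{L}\mathcal{L}_{k-1} Y_{k-1} - (2k+\delta-2)(2k+\delta-3)\bigl((\mathscr{L} Z_2) Y_{k-2} + Z_2\,\mathscr{L} Y_{k-2}\bigr).
\]
The bracket vanishes by the inductive hypothesis once we know $\mathscr{L} Z_2 = 0$, and this is exactly Lemma \ref{lemL} applied to $\mathcal{D}_1 = Z_2$. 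For the first term, $Y_{k-1}$ is homogeneous of degree $-2(k-1)$, so the commutation identity $[\mathscr{L}, \mathcal{L}_{k-1}] = 2(k-1) + \mathscr{E}$ together with $\mathscr{E} Y_{k-1} = -2(k-1) Y_{k-1}$ gives $\mathscr{L}\mathcal{L}_{k-1} Y_{k-1} = \mathcal{L}_{k-1}\mathscr{L} Y_{k-1} = 0$. Hence $\mathscr{L} Y_k = 0$, completing the induction.

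There is really no serious obstacle here; the only points that need care are checking that the Euler-operator correction in $[\mathscr{L},\mathcal{L}_{k-1}]$ cancels against $\deg Y_{k-1}$ (so that $\mathscr{L}$ and $\mathcal{L}_{k-1}$ genuinely commute on $Y_{k-1}$), and that the quadratic term $Z_2 Y_{k-2}$ is treated via the Leibniz rule so it contributes nothing. Alternatively, and perhaps more transparently, one can read the corollary straight off Corollary \ref{corcor}: there $Y_k$ is exhibited as a polynomial in $Z_2, \dots, Z_k$, and since each $Z_j = \mathcal{D}_{j-1}$ is annihilated by $\mathscr{L}$ (Lemma \ref{lemL}), the Leibniz rule immediately yields $\mathscr{L} Y_k = 0$. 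I would present the inductive argument as the main proof and mention this second route as a remark.
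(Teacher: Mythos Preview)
Your proof is correct. In the paper the corollary is stated without proof, the intended argument being exactly your ``alternative'' route: it follows immediately from Corollary~\ref{corcor}, which expresses $Y_k$ as a polynomial in $Z_2,\dots,Z_k$, together with Lemma~\ref{lemL} and the Leibniz rule. Your main inductive argument is a perfectly valid direct proof that bypasses Corollary~\ref{corcor}; it relies only on Lemma~\ref{lemlem} and the commutation fact $\mathscr{L}\mathcal{L}_k P = \mathcal{L}_k\mathscr{L} P$ for $\deg P = -2k$ already noted in the paper, so it is self-contained and arguably cleaner if one does not need the explicit decomposition of $Y_k$.
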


We see that the coefficients $Y_k = \Phi_k(t)$ are thus homogeneous polynomials of $k-1$ variables $Z_2$, $\dots$, $Z_k$, that are themselves differential polynomials in $h(t)$. This result leads to the introduction of the following ansatz we call the $n$-ansatz.

\subsection{The $n$-ansatz for the heat equation}\text{} \label{nahe}

Let $n \in \mathbb{N}$ be the number of parameters and  $\delta = 0$ or $1$ be the parity.

We say that the differential-algebraic solution of the heat equation is in the \emph{$n$-ansatz} if it has the form
\begin{equation} \label{main}
	\psi(z,t) = e^{-{1\over 2} h(t) z^2 + r(t)} \Phi(z; x_2(t), \dots, x_{n+1}(t)),
\end{equation}
for a vector-function $(x_2(t), \dots, x_{n+1}(t))$ of $t$
where the function $\Phi(z; x_2, \dots, x_{n+1})$
in the vicinity of $z=0$ is given by a series of the form
\begin{equation} \label{form}
	\Phi(z; x_2, \dots, x_{n+1}) = z^\delta + \sum_{k \geqslant 2} \Phi_k {z^{2k+\delta} \over (2k+\delta)!}
\end{equation}
for homogeneous polynomials $\Phi_k = \Phi_k(x_2, \dots, x_{n+1})$ of degree~$- 2 k$ for $\deg x_k = - 2 k$.

In other words, the \emph{$n$-ansatz} is the ansatz of the form \eqref{genan}
with the condition that all $\Phi_k(t)$ can be presented as homogeneous polynomials with constant coefficients in $n$ variables depending on $t$.

A detailed description of this ansatz including a motivation to its introduction are given in~\cite{FA}.
Examples of this ansatz as well as classical solutions of the heat equation rewritten in this ansatz are given in section~\ref{ex}. The results of the next two sections show the interest of introducing this ansatz.

\subsection{Heat dynamical systems and the $n$-ansatz} \text{}

In this subsection we describe the relation of the $n$-ansatz to some special class of polynomial dynamical systems that we will call heat dynamical systems.

Let $p_q(x_2, \dots, x_q)$, where $q = 2, 3, \ldots, n+2$ be a set of homogeneous polynomials with $\deg p_q = - 2 q$ for $\deg x_k = - 2 k$.
We will call \emph{heat dynamical systems} the polynomial dynamical systems in~$x_1(t), x_2(t), \dots, x_{n+1}(t)$ of the form
\begin{align} 
{d \over d t} x_1(t) &=  p_{2}(x_2(t)) - x_1(t)^2, \nonumber \\
{d \over d t} x_k(t) &= p_{k+1}(x_2(t), \dots, x_{k+1}(t)) - 2 k x_1(t) x_k(t), \qquad k = 2, \dots, n, \label{hds}\\
{d \over d t} x_{n+1}(t) &= p_{n+2}(x_2(t), \dots, x_{n+1}(t), 0) - 2 k x_1(t) x_{n+1}(t).   \nonumber
\end{align} 

As an example of such a system one can consider the system satisfied by the parameters of an elliptic curve and the coefficient of its Frobenius-Stickelberger connection, see \cite{BB}. For further examples see section~\ref{ex}.

The name \emph{heat dynamical systems} is due to the following theorem.

\begin{thm}\label{t1}
Let $x_1(t), x_2(t), \dots, x_{n+1}(t)$ be a solution of system \eqref{hds}.

Then one can construct the corresponding even and odd $n$-ansatz solutions of the heat equation \eqref{he} as
\begin{equation} \label{sol}
\psi(z,t) = e^{- \frac12 x_1(t) z^2 + r(t)} \Phi(z; x_2(t), \dots, x_{n+1}(t)),
\end{equation}
where $\delta = 1$ for the odd case and $\delta = 0$ for the even case, ${d \over d t} r(t) = - \bigg(\delta + \frac12\bigg) x_1(t)$,
\[
	\Phi(z;x_2, \dots, x_{n+1}) = z^\delta + \sum_{k \geqslant 2} \Phi_k {z^{2k+\delta} \over (2k+\delta)!},
\]
and $\Phi_k$ are polynomials of $x_2, \dots, x_{n+1}$ with $x_{n+2} = 0$ determined recurrently by
\begin{align*}
\Phi_2 &= - 2 (1 + 2 \delta) p_{2}(x_2), \qquad \Phi_{3} = 2 p_{3}(x_2, x_3) {\partial \over \partial x_2}\,\Phi_{2}, \\
\Phi_{q} &= 2 \sum_{k=2}^{n+1} p_{k+1}(x_2, \dots, x_{k+1}) {\partial \over \partial x_k}\,\Phi_{q-1} + {(2 q + \delta - 3) (2 q + \delta - 2) \over 2 (1 + 2 \delta)}\,\Phi_2 \Phi_{q-2}, \qquad q = 4, 5, 6, \dots
\end{align*}
\end{thm}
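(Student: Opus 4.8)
The plan is to reduce the statement to the recursion of Lemma~\ref{lemPhi}. Set $h(t):=x_1(t)$. By Lemma~\ref{lemPhi}, the function \eqref{sol} solves the heat equation \eqref{he} if and only if $r'(t)=-\left(\delta+\tfrac12\right)x_1(t)$, which is one of the hypotheses, and the functions $\Phi_k(t)$ — here obtained by substituting the trajectory $x_2(t),\dots,x_{n+1}(t)$, with $x_{n+2}=0$, into the polynomials $\Phi_k=\Phi_k(x_2,\dots,x_{n+1})$ — satisfy the recursion \eqref{rec} for this $h(t)$, with $\Phi_0=1$, $\Phi_1=0$. So the whole theorem becomes the assertion that the polynomials $\Phi_k$ built by the recursion in the statement, evaluated along a solution of \eqref{hds}, satisfy \eqref{rec}. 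I would first record, by an easy induction from the recursion in the statement (using $\deg p_q=-2q$ and that $\partial/\partial x_k$ raises degree by $2k$), that each $\Phi_k(x_2,\dots,x_{n+1})$ is homogeneous of degree $-2k$ for $\deg x_j=-2j$; this homogeneity is what makes the computation below close up.

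The core step is a chain-rule computation along the flow. Differentiating $\Phi_k$ using \eqref{hds},
\[
	\frac{d}{dt}\Phi_k = \sum_{j=2}^{n+1}\dot x_j\,\frac{\partial\Phi_k}{\partial x_j} = \mathcal P\Phi_k - 2x_1\sum_{j=2}^{n+1} j\,x_j\,\frac{\partial\Phi_k}{\partial x_j},
\qquad
	\mathcal P := \sum_{j=2}^{n+1} p_{j+1}(x_2,\dots,x_{j+1})\,\frac{\partial}{\partial x_j},
\]
where the coefficient in $\mathcal P$ for $j=n+1$ is read as $p_{n+2}(x_2,\dots,x_{n+1},0)$. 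By homogeneity Euler's identity gives $\sum_{j} j\,x_j\,\partial\Phi_k/\partial x_j=k\Phi_k$, hence $\Phi_k'=\mathcal P\Phi_k-2k\,x_1\Phi_k$. Substituting this into \eqref{rec} (with $h=x_1$), the contribution $2\Phi_{k-1}'$ produces $-4(k-1)x_1\Phi_{k-1}$, which cancels the explicit term $4(k-1)h\Phi_{k-1}$ and leaves $2\mathcal P\Phi_{k-1}$. For the last term I would use the first equation of \eqref{hds}: $h'+h^2=\dot x_1+x_1^2=p_2(x_2)$, together with $\Phi_2=-2(1+2\delta)p_2(x_2)$, i.e.\ $p_2(x_2)=-\tfrac{1}{2(1+2\delta)}\Phi_2$. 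Thus \eqref{rec} becomes exactly
\[
	\Phi_k = 2\mathcal P\Phi_{k-1} + \frac{(2k+\delta-2)(2k+\delta-3)}{2(1+2\delta)}\,\Phi_2\Phi_{k-2},
\]
which is precisely the recursion for $\Phi_q$, $q\geqslant 4$, in the statement.

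It remains to reconcile the base cases. $\Phi_0=1$ and $\Phi_1=0$ are built into the form \eqref{form}. For $k=2$, \eqref{rec} gives $\Phi_2=-(\delta+1)(\delta+2)(h'+h^2)=-(\delta+1)(\delta+2)p_2(x_2)$; since $\delta\in\{0,1\}$ one has $(\delta+1)(\delta+2)=2(1+2\delta)$, so this agrees with $\Phi_2=-2(1+2\delta)p_2(x_2)$. For $k=3$, \eqref{rec} gives $\Phi_3=2\Phi_2'+8x_1\Phi_2=2(\mathcal P\Phi_2-4x_1\Phi_2)+8x_1\Phi_2=2\mathcal P\Phi_2$, and as $\Phi_2$ depends on $x_2$ only this equals $2p_3(x_2,x_3)\,\partial\Phi_2/\partial x_2$, as claimed. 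Since \eqref{rec} together with $\Phi_0,\Phi_1$ determines the $\Phi_k(t)$ uniquely, this completes the verification, and the even/odd cases correspond to $\delta=0$ and $\delta=1$.

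I expect no deep obstacle here; the delicate points are purely bookkeeping. The part to get exactly right is the Euler-identity cancellation of the $x_1\Phi_{k-1}$ terms, which is what converts a $t$-derivative into the operator $\mathcal P$; and the index handling in \eqref{hds} — in particular the last equation, whose coefficient of $x_1x_{n+1}$ must be $2(n+1)$ for Euler's identity to apply, and the substitution $x_{n+2}=0$ inside $p_{n+2}$, which is exactly why the $\Phi_k$ are taken as polynomials in $x_2,\dots,x_{n+1}$ with $x_{n+2}=0$.
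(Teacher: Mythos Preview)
Your argument is correct. The reduction to Lemma~\ref{lemPhi} via the chain rule along the flow of \eqref{hds}, the Euler-identity cancellation of the $x_1\Phi_{k-1}$ term, and the identification $h'+h^2=p_2(x_2)=-\tfrac{1}{2(1+2\delta)}\Phi_2$ all go through exactly as you describe; the base cases $k=2,3$ are handled cleanly, including the numerical coincidence $(\delta+1)(\delta+2)=2(1+2\delta)$ for $\delta\in\{0,1\}$. You are also right to flag that the coefficient in the last line of \eqref{hds} must be $2(n+1)$ (the printed $2k$ is a typo) for the Euler identity to close.

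The paper itself does not prove this theorem in the text: it simply cites the key theorem of \cite{FA}. So your write-up is strictly more than what appears here, supplying a self-contained derivation internal to the paper by leveraging Lemma~\ref{lemPhi}. Substantively this is almost certainly the same mechanism as in \cite{FA}---the heat dynamical system is designed precisely so that the $t$-derivative of a degree-$(-2k)$ polynomial in $x_2,\dots,x_{n+1}$ splits into the vector field $\mathcal P$ plus the Euler term---but your version has the advantage of being checkable without consulting the external reference.
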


\begin{proof}
	This theorem is a corollary of the key theorem in \cite{FA}.
\end{proof}

Let us clarify that the heat dynamical systems do not arise form the heat equation, but are an additional construction used to find $n$-ansatz solutions of the heat equation. Examples can be found in section \ref{ex}.

\begin{rem}
	The polynomial dynamical system \eqref{hds} is homogeneous with respect to the grading considered.
\end{rem}

\begin{rem}
	Due to the grading conditions we have $p_{2}(x_2) = c_2 x_2$, $p_{3}(x_2, x_3) = c_3 x_3$ for some constants $c_2$ and $c_3$.
\end{rem}

\begin{rem}
	The group of polynomial transforms of the form
\begin{equation} \label{polytrans}
x_1 \mapsto x_1, \quad x_2 \mapsto c_2 x_2, \quad x_k \mapsto c_k x_k + q_k(x_2, \dots, x_{k-1}), \qquad k = 3, \dots, n+1,
\end{equation}
where $c_k \ne 0$ are constants and  $q_k(x_2, \dots, x_k)$ are homogeneous polynomials, $\deg q_k = - 2 k$,
acts on the space of heat dynamical systems \eqref{hds}.
\end{rem}

\begin{lem} \label{l1}
Each solution of the form \eqref{sol} of the heat equation obtained in theorem \ref{t1} from a heat dynamical system solution may be obtained by the same construction using a system of the from
\begin{align} 
{d \over d t} x_1 &= x_2 - x_1^2, \nonumber \\
{d \over d t} x_k &= x_{k+1} - 2 k x_1 x_k \qquad \text{for} \qquad  k = 2, \dots, N-1, \label{rhds} \\
{d \over d t} x_{N+1} &= P_{N}(x_2, \dots, x_{N}) - 2 (N+1) x_1 x_{N+1}. \nonumber
\end{align}
Here $N \leqslant n$ and $P_{N}(x_2, \dots, x_{N})$ is a homogeneous polynomial of degree $- 2(N+2)$ for $\deg x_k = - 2 k$. The resulting heat equation solution is in the $N$-ansatz.
\end{lem}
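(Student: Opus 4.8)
The plan is to extract, from the given heat dynamical system, an ordinary differential equation for the single function $h(t):=x_1(t)$ of exactly the shape packaged in \eqref{rhds}, and to note that the solution \eqref{sol} depends on $x_1$ alone. The latter follows from Lemma~\ref{lemPhi}: a heat solution of the form \eqref{genan} is determined by $h$ (up to the additive constant in $r$), since its coefficients $\Phi_k(t)$ must obey the recursion \eqref{rec} with $\Phi_0=1$, $\Phi_1=0$ regardless of which heat dynamical system produced them; so two heat dynamical systems with the same first component $x_1$ yield the same $\psi$ by Theorem~\ref{t1}. Writing $\mathcal L_k=\frac{d}{dt}+2k\,x_1$ and $\mathcal D_k$ (with $\mathcal D_1=\mathcal L_{1/2}h$ and $\mathcal D_k=\mathcal L_k\mathcal D_{k-1}$, $k>1$) for the operators and differential polynomials introduced before Lemma~\ref{lemL}, the system \eqref{rhds} amounts precisely to the substitutions $x_k=\mathcal D_{k-1}[h]$, $k=2,\dots,N+1$, together with the single equation $\mathcal D_{N+1}[h]=P_N(\mathcal D_1[h],\dots,\mathcal D_{N-1}[h])$. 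It therefore suffices to produce such an equation with some $N\le n$: applying Theorem~\ref{t1}, with $n$ replaced by $N$, to the solution $\bigl(h,\mathcal D_1[h],\dots,\mathcal D_N[h]\bigr)$ of \eqref{rhds} then returns the original $\psi$, now manifestly in the $N$-ansatz.

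To produce the equation I would invert the triangular chain \eqref{hds}. By the grading, $p_{k+1}=c_{k+1}x_{k+1}+R_{k+1}(x_2,\dots,x_{k-1})$ for a constant $c_{k+1}$ and a homogeneous $R_{k+1}$ of degree $-2(k+1)$ (for degree reasons $x_k$ cannot occur, and $x_{k+1}$ only linearly). Assume first that $c_2,\dots,c_{n+1}$ are all nonzero. The first equation of \eqref{hds} gives $x_2=c_2^{-1}\mathcal D_1[h]$, and the $k$-th equation, rewritten as $\mathcal L_k x_k=c_{k+1}x_{k+1}+R_{k+1}(x_2,\dots,x_{k-1})$, determines $x_{k+1}$ recursively as a differential polynomial in $h$. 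I would then prove by induction on $k\ge2$ that $x_k$ has order $k-1$ in $h$ and satisfies $\mathscr L x_k=0$: the base case is $\mathscr L\mathcal D_1=0$ (Lemma~\ref{lemL}); for the step, $\deg x_k=-2k$ permits the use of $\mathscr L\mathcal L_k x_k=\mathcal L_k\mathscr L x_k$, whence $\mathscr L(c_{k+1}x_{k+1})=\mathcal L_k\mathscr L x_k-\mathscr L R_{k+1}(x_2,\dots,x_{k-1})=0$ (both summands vanish, by the inductive hypothesis and the Leibniz rule for $\mathscr L$), while comparing orders ($\mathcal L_k x_k$ has order $k$, $R_{k+1}(x_2,\dots,x_{k-1})$ has order $\le k-2$) forces $x_{k+1}$ to have order $k$. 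By the description of $\ker\mathscr L$ in the proof of Theorem~\ref{thmBR}, each $x_k$ ($k\ge2$) is then a polynomial in $\mathcal D_1,\mathcal D_2,\dots$ with no explicit dependence on $h=y_1$.

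Substituting into the last equation of \eqref{hds}, namely $\mathcal L_{n+1}x_{n+1}=p_{n+2}(x_2,\dots,x_{n+1},0)$, the right-hand side is by the grading a polynomial in $x_2,\dots,x_n$, hence a differential polynomial in $h$ of order $\le n-1$ killed by $\mathscr L$. Consequently $\mathscr D:=\mathcal L_{n+1}x_{n+1}-p_{n+2}(x_2,\dots,x_{n+1},0)$ is a homogeneous differential polynomial in $h$ of degree $-2(n+2)$ with $\mathscr L\mathscr D=\mathcal L_{n+1}\mathscr L x_{n+1}-\mathscr L\,p_{n+2}(\dots)=0$, and of order exactly $n+1$ (the first summand has order $n+1$ because $x_{n+1}$ has order $n$; the second has order $\le n-1$). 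Theorem~\ref{thmBR} then gives $\mathscr D=c\,\mathcal D_{n+1}-P_n(\mathcal D_1,\dots,\mathcal D_{n-1})$ with $c\ne0$, since $\mathcal D_{n+1}$ is the only degree-$(-2(n+2))$ monomial in the $\mathcal D_j$ of order $n+1$ while $\mathscr D$ has that order. As $\mathscr D$ vanishes along the trajectory, $h$ solves $\mathcal D_{n+1}[h]=c^{-1}P_n(\mathcal D_1[h],\dots,\mathcal D_{n-1}[h])$, i.e.\ \eqref{rhds} with $N=n$.

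If some leading coefficient instead vanishes, let $M+1\in\{2,\dots,n+1\}$ be the least index with $c_{M+1}=0$, and run the inversion only through $x_M$. Since $c_{M+1}=0$, the $M$-th equation reads $\mathcal L_M x_M=p_{M+1}(x_2,\dots,x_{M-1})$ (again $x_M$, $x_{M+1}$ drop out by the grading), and the same reasoning with $\mathscr D:=\mathcal L_M x_M-p_{M+1}(x_2,\dots,x_{M-1})$ and Theorem~\ref{thmBR} gives $\mathcal D_M[h]=P_{M-1}(\mathcal D_1[h],\dots,\mathcal D_{M-2}[h])$, i.e.\ \eqref{rhds} with $N=M-1\le n-1$; the sub-case $c_2=0$ yields $\mathcal D_1[h]=0$, hence $\mathcal D_2[h]=0$, so one may take $N=1$, $P_1=0$. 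In all cases $N\le n$, and since $\bigl(h,\mathcal D_1[h],\dots,\mathcal D_N[h]\bigr)$ is readily checked to solve \eqref{rhds}, the first paragraph completes the proof. The step I expect to need genuine care is establishing that the inverted variables $x_k$ lie in $\ker\mathscr L$ with the leading term $c\,\mathcal D_{n+1}$ present, since this is exactly what makes Theorem~\ref{thmBR} applicable and delivers $P_N$ in the advertised form; the degenerate bookkeeping and the verification that \eqref{rhds} holds are then routine.
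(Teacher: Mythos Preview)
Your argument is correct. The paper's own proof is a one-liner: it invokes the group of polynomial transforms \eqref{polytrans} to bring \eqref{hds} to the reduced form \eqref{rhds}, deferring the details to \cite{FA}. You instead pass through the scalar picture: invert the triangular chain to express each $x_k$ as a differential polynomial in $h=x_1$, check that these lie in $\ker\mathscr{L}$ with the right order, and then apply Theorem~\ref{thmBR} to the last equation to force the residual relation into the shape $\mathcal{D}_{N+1}=P_N(\mathcal{D}_1,\dots,\mathcal{D}_{N-1})$; uniqueness via Lemma~\ref{lemPhi} closes the loop. At the algebraic level the two routes are cousins (both triangularize the system), but conceptually they differ: the paper changes coordinates on the dynamical system, while you reduce to the scalar ODE and rebuild. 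Your approach buys self-containment --- everything is proved using the $\mathscr{L}$-machinery already developed in the paper, with no appeal to \cite{FA} --- and your degenerate-$c_{M+1}$ bookkeeping makes explicit how $N$ can drop below $n$, which the transforms \eqref{polytrans} (requiring $c_k\ne0$) do not handle on their own. The paper's approach buys brevity and makes clear that the reduction is a normal-form statement for an honest group action on the space of systems.
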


We will call systems of the form \eqref{rhds} \emph{reduced heat dynamical systems}.

\begin{proof}
	This system is obtained from system \eqref{hds} by using the polynomial transforms \eqref{polytrans}. More details on the action of this group can be found in \cite{FA}.
\end{proof}

\begin{rem}
	For $N = n$ system \eqref{rhds} is a special class of heat dynamical systems \eqref{hds} with $p_k(x_2, \dots, x_k) = x_k$, $k = 2, \dots, n+1$, $p_{n+2}(x_2, \dots, x_{n+1}, 0) = P_{n}(x_2, \dots, x_{n})$.
\end{rem}

\subsection{Ordinary differential equations and the $n$-ansatz.}\text{}

We consider the class of equations of the form
\begin{equation} \label{ode}
	\mathcal{D}_{n+1} - P_{n}(\mathcal{D}_1, \dots, \mathcal{D}_{n-1}) = 0.
\end{equation}
Here $P_{n}(x_2, \dots, x_{n})$ is as before a homogeneous polynomial of degree $n+2$.

\begin{rem}
	The left part of equation \eqref{ode} is a homogeneous differential polynomial of degree $- 2 (n + 2)$ with respect to the grading $\deg t = 2$, $\deg h = -2$.
\end{rem}

\begin{thm}\label{t2}
Let $h(t)$ be a solution of the ordinary differential equation \eqref{ode}.

Then one can construct the corresponding even and odd $n$-ansatz solutions of the heat equation \eqref{he} as
\begin{equation*} 
\psi(z,t) = e^{- \frac12 h(t) z^2 + r(t)} \Phi(z; x_2(t), \dots, x_{n+1}(t)),
\end{equation*}
where $\delta = 1$ for the odd case and $\delta = 0$ for the even case, ${d \over d t} r(t) = - \bigg(\delta + \frac12\bigg) h(t)$.
The vector-function $(x_2(t), \dots, x_{n+1}(t))$ is determined by the relation $x_k(t) = \mathcal{D}_{k-1}$ and for the function $\Phi$ we have
\[
	\Phi(z;x_2, \dots, x_{n+1}) = z^\delta + \sum_{k \geqslant 2} \Phi_k {z^{2k+\delta} \over (2k+\delta)!},
\]
where $\Phi_k$ are polynomials of $x_2, \dots, x_{n+1}$ determined recurrently by
\begin{align*}
 \Phi_{1} &= 0, \qquad \Phi_2 = - 2 (1 + 2 \delta) x_2, \\
\Phi_{q} &= 2 \left( \sum_{k=2}^{n} x_{k+1} {\partial \over \partial x_k} + P_{n}(x_2, \dots, x_{n}) {\partial \over \partial x_{n+1}} \right) \Phi_{q-1}+ {(2 q + \delta - 3) (2 q + \delta - 2) \over 2 (1 + 2 \delta)}\,\Phi_2 \Phi_{q-2}, \quad q = 3, 4, 5, \dots
\end{align*}
\end{thm}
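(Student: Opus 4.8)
The plan is to reduce Theorem~\ref{t2} to Theorem~\ref{t1} by exhibiting the solution $h(t)$ of the ordinary differential equation \eqref{ode} as producing a solution of a reduced heat dynamical system \eqref{rhds}, and then invoking Lemma~\ref{l1} together with Theorem~\ref{t1}. The bridge between the two pictures is the substitution $x_k(t) = \mathcal{D}_{k-1}$, already anticipated in the statement and in the remark following Lemma~\ref{l1}. Thus the first step is to verify that the functions $x_k(t) = \mathcal{D}_{k-1}(h(t), h'(t), \dots)$, for $k = 2, \dots, n+1$, together with $x_1(t) = h(t)$, satisfy the reduced heat dynamical system. For $x_1$ this is $\dot h = x_2 - h^2$, which holds because $x_2 = \mathcal{D}_1 = \mathcal{L}_{1/2} h = h' + h^2$. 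For $2 \leqslant k \leqslant n$ the equation $\dot x_k = x_{k+1} - 2k x_1 x_k$ is exactly the identity $\frac{d}{dt}\mathcal{D}_{k-1} + 2k\, h\, \mathcal{D}_{k-1} = \mathcal{L}_k \mathcal{D}_{k-1} = \mathcal{D}_k$, which is the definition of $\mathcal{D}_k$ (with the index shift $x_{k+1} = \mathcal{D}_k$). Finally, for the top equation, $\dot x_{n+1} = P_n(x_2,\dots,x_n) - 2(n+1) x_1 x_{n+1}$ rewrites as $\mathcal{L}_{n+1}\mathcal{D}_n = P_n(\mathcal{D}_1,\dots,\mathcal{D}_{n-1})$, i.e. $\mathcal{D}_{n+1} = P_n(\mathcal{D}_1,\dots,\mathcal{D}_{n-1})$, which is precisely the hypothesis that $h(t)$ solves \eqref{ode}. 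So a solution of \eqref{ode} gives, via $x_k = \mathcal{D}_{k-1}$, a solution of the reduced heat dynamical system with $N = n$.

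The second step is to feed this solution into Theorem~\ref{t1}. By the remark after Lemma~\ref{l1}, the reduced system \eqref{rhds} with $N=n$ is the heat dynamical system \eqref{hds} corresponding to the choice $p_{k+1}(x_2,\dots,x_{k+1}) = x_{k+1}$ for $k = 1, \dots, n$ and $p_{n+2}(x_2,\dots,x_{n+1},0) = P_n(x_2,\dots,x_n)$. Substituting these particular $p_q$ into the recursion for $\Phi_k$ given in Theorem~\ref{t1} turns the operator $\sum_{k=2}^{n+1} p_{k+1}\frac{\partial}{\partial x_k}$ into $\sum_{k=2}^{n} x_{k+1}\frac{\partial}{\partial x_k} + P_n(x_2,\dots,x_n)\frac{\partial}{\partial x_{n+1}}$, which is exactly the operator appearing in the $\Phi_q$-recursion of Theorem~\ref{t2}; likewise $\Phi_2 = -2(1+2\delta)p_2(x_2) = -2(1+2\delta)x_2$ matches, and the $\Phi_2\Phi_{q-2}$ term is copied verbatim. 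Therefore the polynomials $\Phi_k$ of Theorem~\ref{t2} coincide with those produced by Theorem~\ref{t1} applied to this heat dynamical system, and the exponential prefactor $e^{-\frac12 x_1(t)z^2 + r(t)}$ with $\dot r = -(\delta+\frac12)x_1(t) = -(\delta+\frac12)h(t)$ agrees as well. Hence $\psi(z,t)$ as written in Theorem~\ref{t2} is the solution of the heat equation furnished by Theorem~\ref{t1}, and it is in the $n$-ansatz.

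One should also check the consistency point that is implicit in writing $x_k(t) = \mathcal{D}_{k-1}$: a priori $\mathcal{D}_{n+1}$ is a differential polynomial in $h$ up to order $n+1$, i.e. involves $h^{(n+1)}$, but equation \eqref{ode} expresses $h^{(n+1)}$ (which enters $\mathcal{D}_{n+1}$ linearly with nonzero coefficient) in terms of lower derivatives, so along a solution $h(t)$ all the $x_k$ and all the relations above are genuine functions of $t$ with no hidden obstruction. This is the step I expect to require the most care: verifying that $h^{(n+1)}$ appears in $\mathcal{D}_{n+1}$ with a nonzero (constant) coefficient — which follows by induction from $\mathcal{D}_k = \mathcal{L}_k\mathcal{D}_{k-1} = \frac{d}{dt}\mathcal{D}_{k-1} + 2k h\,\mathcal{D}_{k-1}$, since $\frac{d}{dt}$ raises the top derivative order by one with coefficient $1$ — and then noting that \eqref{ode} is therefore a bona fide normal-form ODE of order $n+1$ for $h$. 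Granting Theorem~\ref{t1}, the rest is the bookkeeping of matching the two recursions, which is routine once the substitution $x_k = \mathcal{D}_{k-1}$ is in place. The commutation relation $\mathscr{L}\mathcal{L}_k P = \mathcal{L}_k\mathscr{L}P$ and Lemma~\ref{lemL} are not strictly needed for this deduction, but they guarantee consistency with the earlier characterization in Theorem~\ref{thmBR} of which differential polynomials can legitimately appear as the left-hand side of \eqref{ode}.
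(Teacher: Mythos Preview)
Your proposal is correct and follows exactly the route the paper takes: deduce Theorem~\ref{t2} from Theorem~\ref{t1} by checking that $x_1(t)=h(t)$, $x_k(t)=\mathcal{D}_{k-1}$ solve the reduced heat dynamical system \eqref{rhds} with $N=n$ (which is the special case of \eqref{hds} in the remark after Lemma~\ref{l1}), and then specializing the $\Phi_q$-recursion of Theorem~\ref{t1} accordingly. The paper's own proof is the two-line version of what you wrote; your additional verification that \eqref{ode} is genuinely of order $n+1$ in $h$ is a nice sanity check but not needed for the deduction.
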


\begin{proof}
	This theorem is a corollary of theorem \ref{t1}. We have $x_1(t) = h(t)$ and the expressions for $x_k(t)$ for $k \geqslant 2$ follow from lemma \ref{l1}.
\end{proof}

Therefore given a polynomial $P_{n}$ for some $n$ and a solution of the ordinary differential equation \eqref{ode} we obtain an even and an odd $n$-ansatz solutions of the heat equation.

\section{Burgers equation} \label{s3}

\subsection{General remarks.}  \text{}

\emph{The Burgers equation}
\begin{equation} \label{Bur}
v_t + v v_z = \mu v_{zz}
\end{equation} 
is named after Johannes M.\,Burgers, 1895--1981.
Here $v  = v(z,t)$ and $\mu$ is constant, $\deg \mu = 0$.
This equation occurs in various areas of applied mathematics, for example
fluid and gas dynamics, acoustics, traffic flow.
It is used for description of wave processes with velocity $v$
and viscosity coefficient $\mu$.

The rescaling $t \mapsto \nu t$, $z \mapsto \nu z$ where $\nu \ne 0$ is a constant, $\deg \nu = 0$, brings \eqref{Bur} into the Burgers equation with $\mu \mapsto \nu \mu$.

For $\mu =0$ we obtain the Hopf equation
(named after Eberhard F.\,Hopf, 1902--1983).
It is the simplest equation describing discontinuous flows or flows with shock waves.

\emph{The Cole-Hopf} transformation of a function $\psi(z, t)$ is
\[
v(z, t) = - 2 \mu {\partial \ln \psi(z, t) \over \partial z}. 
\]
\begin{lem}[see \cite{Hopf}]
Let $\psi(z, t)$ be a solution of the equation 
\[
	{\partial \over \partial t} \psi(z, t) = \mu {\partial^2 \over \partial z^2} \psi(z, t) - f(t) \psi(z, t) 
\]
for some $f(t)$. Then $v(z, t)$ is a solution of the Burgers equation.
\end{lem}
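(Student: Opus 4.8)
The plan is to substitute the Cole--Hopf formula directly and reduce the Burgers equation \eqref{Bur} to an identity. First I would pass to the logarithmic potential $\varphi(z,t) = \ln \psi(z,t)$, working on the open set where $\psi \neq 0$, so that $v = -2\mu\,\varphi_z$. Since $\psi_z/\psi = \varphi_z$ and hence $\psi_{zz}/\psi = \varphi_{zz} + \varphi_z^2$, dividing the given equation $\psi_t = \mu\psi_{zz} - f(t)\psi$ by $\psi$ turns it into the potential equation
\[
	\varphi_t = \mu\bigl(\varphi_{zz} + \varphi_z^2\bigr) - f(t).
\]
The key point is that the inhomogeneous term $f(t)$ does not depend on $z$.

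Next I would differentiate this identity once in $z$. The term $-f(t)$ drops out, leaving $\varphi_{zt} = \mu\bigl(\varphi_{zzz} + 2\varphi_z\varphi_{zz}\bigr)$. Multiplying by $-2\mu$ and using $v = -2\mu\varphi_z$, $v_z = -2\mu\varphi_{zz}$, $v_{zz} = -2\mu\varphi_{zzz}$, this becomes
\[
	v_t = -2\mu^2\varphi_{zzz} - 4\mu^2\varphi_z\varphi_{zz} = \mu v_{zz} - v v_z,
\]
which is exactly the Burgers equation \eqref{Bur}. Alternatively, one can avoid introducing $\varphi$ and simply compute $v_t$, $v_z$, $v_{zz}$ as rational expressions in $\psi$ and its $z$-derivatives, using $\psi_{zt} = \partial_z\psi_t = \mu\psi_{zzz} - f(t)\psi_z$; again the $f(t)$-contributions to $v_t$ cancel in pairs, and collecting terms gives $v_t + v v_z - \mu v_{zz} = 0$.

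There is no genuine obstacle here: the statement is a routine differentiation. The only things worth flagging are that the computation takes place on the locus where $\psi \neq 0$, and that the radioactive-loss term $f(t)\psi$ is annihilated by $\partial_z \ln$, which is precisely why equation \eqref{heg} — rather than only the pure heat equation \eqref{he} — is the natural source of Burgers solutions via the Cole--Hopf transformation.
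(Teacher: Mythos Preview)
Your argument is correct and is essentially the same computation the paper gives, only unpacked. The paper compresses everything into the single identity
\[
v_t + v v_z - \mu v_{zz} \;=\; -2\mu\,\frac{\partial}{\partial z}\!\left(\frac{\psi_t - \mu\psi_{zz}}{\psi}\right),
\]
which in your notation is $-2\mu\,\partial_z\bigl(\varphi_t - \mu(\varphi_{zz}+\varphi_z^2)\bigr)$, and then notes that $(\psi_t - \mu\psi_{zz})/\psi = -f(t)$ is independent of~$z$; your step-by-step derivation via $\varphi$ is precisely how one checks that identity.
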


The proof of this lemma follows form the identity 
\[
v_t +  v v_z -  \mu v_{zz} = - 2 \mu {\partial \over \partial z} \left({\psi_t - \mu \psi_{zz} \over \psi} \right). 
\]

\begin{cor} \label{cor3}
Let $\psi(z, t)$ be a solution of the heat equation \eqref{he} or \eqref{heg}. Then 
\[
v(z, t) = - {\partial \ln \psi(z, t) \over \partial z}. 
\]
is a solution of the Burgers equation with $\mu = {1 \over 2}$.
\end{cor}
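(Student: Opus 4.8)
The plan is to read Corollary~\ref{cor3} as the specialization $\mu = \tfrac12$ of the preceding Lemma of Hopf, and to match the two equations in the hypothesis to the single equation appearing there. First I would note that the equation
\[
	{\partial \over \partial t} \psi(z, t) = \mu {\partial^2 \over \partial z^2} \psi(z, t) - f(t) \psi(z, t)
\]
with $\mu = \tfrac12$ and $f(t) \equiv 0$ is exactly the heat equation \eqref{he}, and with $\mu = \tfrac12$ and $f(t)$ arbitrary it is exactly equation \eqref{heg}. So any $\psi$ satisfying \eqref{he} or \eqref{heg} satisfies the hypothesis of the Lemma for a suitable choice of $f(t)$ (namely $f \equiv 0$ in the first case, the given $f$ in the second) with $\mu = \tfrac12$.

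Next I would check that the Cole--Hopf transformation specializes correctly: for $\mu = \tfrac12$ we have
\[
	v(z,t) = - 2 \mu {\partial \ln \psi(z, t) \over \partial z} = - {\partial \ln \psi(z, t) \over \partial z},
\]
which is precisely the $v(z,t)$ in the statement of the corollary. Having matched both the source equation and the transformation, the Lemma immediately gives that $v(z,t)$ solves the Burgers equation \eqref{Bur} with $\mu = \tfrac12$, i.e. $v_t + v v_z = \tfrac12 v_{zz}$.

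I do not expect any real obstacle here: the corollary is a direct reading-off of the Lemma once one observes that \eqref{he} and \eqref{heg} are the $\mu=\tfrac12$ instances of its hypothesis. If one wanted a self-contained argument one could instead invoke the identity
\[
	v_t + v v_z - \mu v_{zz} = - 2 \mu {\partial \over \partial z} \left({\psi_t - \mu \psi_{zz} \over \psi}\right)
\]
quoted after the Lemma: with $\mu = \tfrac12$ the bracket on the right equals $-f(t)$ (a function of $t$ alone, equal to $0$ in the case of \eqref{he}), whose $z$-derivative vanishes, so the left-hand side is zero. Either route is essentially a one-line verification.
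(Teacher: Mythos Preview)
Your proposal is correct and matches the paper's approach: the corollary is stated without a separate proof, being an immediate specialization of the preceding Lemma with $\mu=\tfrac12$, exactly as you describe. Your optional second route via the displayed identity is also the same one the paper uses to justify the Lemma itself.
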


	The Burgers equation is not linear, therefore if we consider only even or only odd solutions with respect to $z$ we will not obtain the general solution. Yet we may observe that if $\psi(z, t)$ is a solution of the heat equation even or an odd with respect to $z$, then its Cole-Hopf transformation $v(z,t)$ is a solution of the Burgers equation which is odd with respect to $z$. Let us focus on the problem of finding odd solutions of the Burgers equation.
	
	Another related note is that an even heat equation solution regular at $(z, t) = (0, 0)$ gives a Burgers equation solution regular at $(z, t) = (0, 0)$ while for an odd heat equation solution the same is not true.

\subsection{$N$-ansatz for the Burgers equation.} \text{}
	
The image of the $n$-ansatz for the heat equation under the Cole-Hopf has the form
\[
	v(z,t) = - {\delta \over z} + h(t) z - \Psi(z; x_2(t), \dots, x_{n+1}(t)),
\]
for a vector-function $(x_2(t), \dots, x_{n+1}(t))$ of $t$
where the function $\Psi(z; x_2, \dots, x_{n+1})$
in the vicinity of $z=0$ is given by a series of the form
\[
	\Psi(z; x_2, \dots, x_{n+1}) = \sum_{k \geqslant 2} {\Psi_k \over (2 \delta k + 1)} {z^{2k-1} \over (2k-1)!}
\]
for homogeneous polynomials $\Psi_k = \Psi_k(x_2, \dots, x_{n+1})$ of degree~$- 2 k$ for $\deg x_k = - 2 k$.

Here $\delta$ is as before $0$ or $1$. 

The relation with the heat equation $n$-ansatz \eqref{main} is
\[
	\Psi_k = \Phi_k + Q_k(\Phi_2, \Phi_3, \ldots, \Phi_{k-2})
\]
for some homogeneous polynomials $Q_k$ of degree~$- 2 k$ for $\deg \Phi_k = - 2 k$.

\begin{rem}
	Note that for $\delta = 0$ if we put $h(t) = - \Psi_1$ we simply obtain the series decomposition for an odd function of $z$ regular at $z = 0$
	\[
		- v(z,t) =  \sum_{k \geqslant 1} \Psi_k {z^{2k-1} \over (2k-1)!}.
	\]
	The $n$-ansatz condition in this case is that $\Psi_k$ for $k \geqslant 2$ are all polynomials of $n$ graded variables $x_k$.

\end{rem}

\subsection{Differential-algebraic ansatz for the Burgers equation}\text{}

The image of the differential-algebraic ansatz for an even solution of the heat equation under the Cole-Hopf transform has the form
\[
	v(z,t) = h(t) z -  \sum_{k \geqslant 2} \Psi_k(t) {z^{2k-1} \over (2k-1)!},
\]
where $h(t)$ and $\Psi_k(t)$ are functions of $t$ with $\deg h(t) = - 2$, $\deg \Psi_k(t) = - 2 k$.

Therefore any odd solution of the Burgers equation regular at $z=0$ is an image of an even solution of the heat equation under the Cole-Hopf transform.

The case  $\delta = 1$ can be considered in a similar way.

\section{Examples of $n$-ansatz solutions of the heat equation}  \label{ex} 

\subsection{Rational solutions of heat dynamical systems} \text{}

For special $P_n$ equation \eqref{ode} admits a series of rational solutions of the form
\[
	h(t) = {1 \over n+1} \sum_{k=1}^{n+1} {\alpha_k \over \alpha_k t - \beta_k} 
\]
for constants $(\alpha_k:\beta_k) \in \mathbb{C}P^1$.

We have
\[
	{d^m \over d t^m} h(t) =   {(- 1)^m m! \over n+1} \sum_{k=1}^{n+1} {\alpha_k^{m+1} \over (\alpha_k t - \beta_k)^{m+1}} 
\]
and ${d^{n+1} \over d t^{n+1}} h(t)$ is expressed in terms of ${d^m \over d t^m} h(t)$ for $m < n+1$ using classical formulas. This expression is homogeneous and gives equation \eqref{ode}.

Some examples of the corresponding heat equation solutions will be given in the following sections.

\subsection{$0$-ansatz solutions} \label{ses} \text{}

Let us consider the case with an empty set of parameters $x_k$.

According to theorem \ref{t1}, the solutions of the heat equation in the $0$-ansatz have the form
\[
	\psi(z,t) = e^{-{1\over 2} h(t) z^2 + r(t)} z^\delta,
\]
where  ${d \over d t} r(t) = - \bigg(\delta + \frac12\bigg) h(t)$ and system \eqref{hds} with $x_1(t) = h(t)$ yelds
\[
{d \over d t} h(t) =   - h(t)^2.
\]
This equation coincides with $\mathcal{D}_1 = 0$.
Its general solution is
\[
	h(t) = {\alpha \over \alpha t - \beta}, \quad \text{where} \quad (\alpha:\beta) \in \mathbb{C}P^1.
\]

\begin{thm}
	The general $0$-ansatz solution of the heat equation is 
\begin{equation} \label{n0}
	\psi(z,t) = \left({\alpha \over \alpha t - \beta}\right)^{{1 \over 2} + \delta} \exp\left(- {\alpha z^2 \over 2 (\alpha t - \beta)} + r_0\right) z^\delta
\end{equation}
where $\delta = 1$ for an odd function and $\delta = 0$ for an even function, $(\alpha:\beta) \in \mathbb{C}P^1$ and $r_0 \in \mathbb{C}$ are constants.
\end{thm}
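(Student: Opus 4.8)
The plan is to specialize Theorem~\ref{t1} (equivalently Theorem~\ref{t2} with $n=0$ and empty polynomial $P_0$) to the degenerate case where there are no parameters $x_2, \dots, x_{n+1}$ at all. In that situation the series \eqref{form} truncates to $\Phi(z) = z^\delta$, so the $n$-ansatz solution \eqref{sol} reduces to $\psi(z,t) = e^{-\frac12 h(t) z^2 + r(t)} z^\delta$, and the only content left in the system \eqref{hds} is the first equation with no $p_2$-term, namely $h'(t) = -h(t)^2$, together with the auxiliary relation $r'(t) = -(\delta + \tfrac12) h(t)$. So the proof is essentially: solve these two ODEs explicitly and substitute.

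First I would solve $h'(t) = -h(t)^2$. This is separable: $-h'/h^2 = 1$, so $(1/h)' = 1$, giving $1/h(t) = t - c$ for a constant, i.e. $h(t) = 1/(t-c)$; writing $c = \beta/\alpha$ and clearing denominators gives the projective form $h(t) = \alpha/(\alpha t - \beta)$ with $(\alpha:\beta) \in \mathbb{C}P^1$, where $(\alpha:0)$ recovers the trivial solution $h \equiv 0$. (One should note this is the general solution of the first-order ODE $\mathcal{D}_1 = \mathcal{L}_{1/2} h = h' + h^2 = 0$, matching the remark in the text that the equation coincides with $\mathcal{D}_1 = 0$.) Next I would integrate $r'(t) = -(\delta + \tfrac12) h(t) = -(\delta+\tfrac12)\,\alpha/(\alpha t - \beta)$. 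Since $\frac{d}{dt}\ln(\alpha t - \beta) = \alpha/(\alpha t-\beta)$, this integrates to $r(t) = -(\delta+\tfrac12)\ln(\alpha t - \beta) + r_0$ for an integration constant $r_0 \in \mathbb{C}$, hence $e^{r(t)} = (\alpha t - \beta)^{-(\frac12+\delta)} e^{r_0}$.

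Finally I would assemble the pieces. Substituting $h(t) = \alpha/(\alpha t-\beta)$ and $e^{r(t)} = (\alpha t-\beta)^{-(\frac12+\delta)} e^{r_0}$ into $\psi(z,t) = e^{-\frac12 h(t) z^2 + r(t)} z^\delta$ yields
\[
	\psi(z,t) = (\alpha t - \beta)^{-(\frac12+\delta)}\exp\!\left(-\frac{\alpha z^2}{2(\alpha t - \beta)} + r_0\right) z^\delta,
\]
and absorbing the constant $\alpha^{\frac12+\delta}$ into the normalization (or, to match the stated form verbatim, keeping it explicit) gives exactly \eqref{n0}. To see that this is the \emph{general} $0$-ansatz solution and not merely one example, I would invoke the discussion preceding Theorem~\ref{t1}: every $0$-ansatz solution has the displayed form $e^{-\frac12 h z^2 + r}z^\delta$ with $h, r$ forced to satisfy precisely the two ODEs above, so the general solution of those ODEs gives the general $0$-ansatz solution. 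The only mild subtlety — hardly an obstacle — is bookkeeping of the two independent constants: the projective parameter $(\alpha:\beta)$ coming from the homogeneous first-order ODE for $h$, and the additive constant $r_0$ coming from integrating $r$; one should also check the parity claim, which is immediate since $z^\delta$ and the Gaussian factor make $\psi$ even for $\delta = 0$ and odd for $\delta = 1$.
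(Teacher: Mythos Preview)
Your proposal is correct and follows exactly the paper's approach: the text preceding the theorem specializes Theorem~\ref{t1} to the empty parameter set, reduces the ansatz to $\psi = e^{-\frac12 h z^2 + r} z^\delta$, records the equations $h' = -h^2$ (noting this is $\mathcal{D}_1=0$) and $r' = -(\delta+\tfrac12)h$, and writes down the general solution $h(t)=\alpha/(\alpha t-\beta)$, after which the theorem is immediate. One small slip in your parenthetical: the trivial solution $h\equiv 0$ corresponds to the projective point $(\alpha:\beta)=(0:1)$, not $(\alpha:0)$ (the latter gives $h(t)=1/t$).
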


\begin{rem}
	For $\delta = 0$ and ${\beta \over \alpha} < t < \infty$ the $0$-ansatz solution gives the Gaussian density distribution on the real axis.
\end{rem}

For the general $0$-ansatz solution \eqref{n0} and $\mu = {1 \over 2}$
the Cole-Hopf transformation gives a rational solution of the Burgers equation \eqref{Bur}:
\[
v(z,t) =  { \alpha z \over \alpha t - \beta} - {\delta \over z}.
\]

The following image represents solution \eqref{n0} and its Cole-Hopf transformation in the case $\delta = 0$, $\beta = 0$, $r_0 = 0$ for $t = {1 \over 16}$, $t = {1 \over 4}$ and $t = 1$.
In this case the boundary conditions are $\psi(0,t) =t^{-{1 \over 2}}, \psi'(0,t) = 0$.
\begin{center}
\includegraphics[scale=0.35]{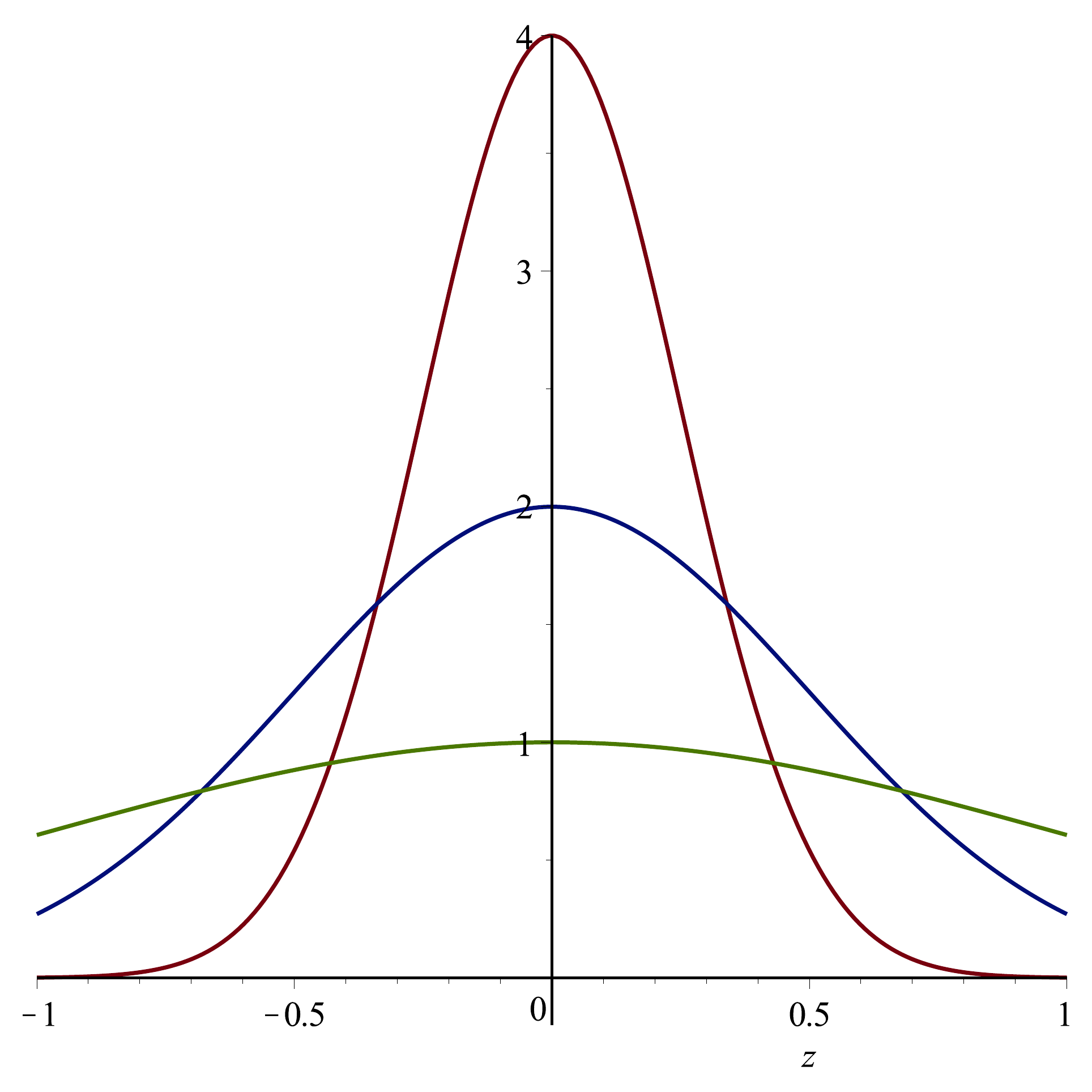}
\includegraphics[scale=0.35]{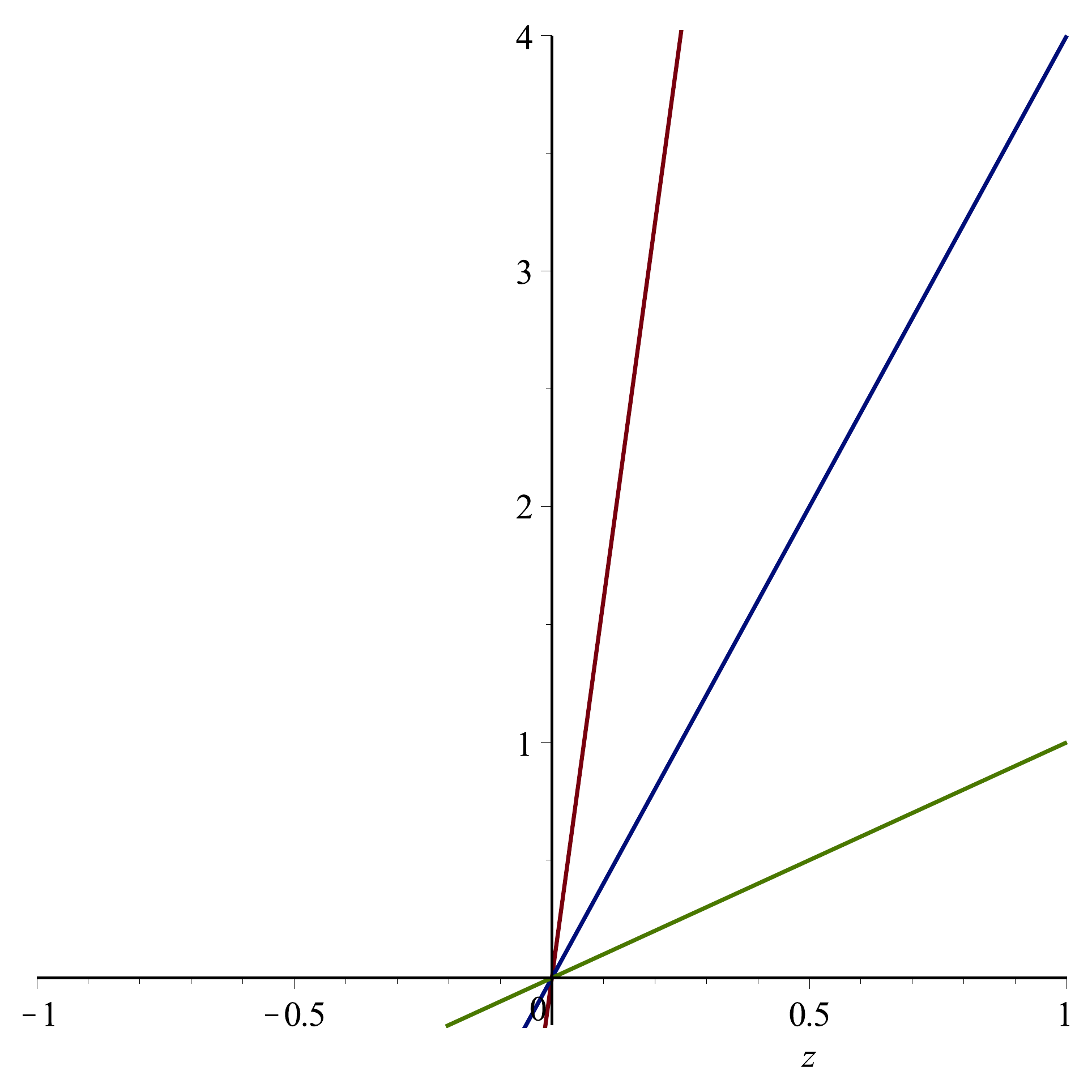}
\end{center}

The following image represents solution \eqref{n0} and its Cole-Hopf transformation in the case $\delta = 1$, $\beta = 0$, $r_0 = 0$ for $t = {1 \over 16}$, $t = {1 \over 4}$ and $t = 1$.
In this case the boundary conditions are $\psi(0,t) =0, \psi'(0,t) = t^{-{3 \over 2}}$.
\begin{center}
\includegraphics[scale=0.35]{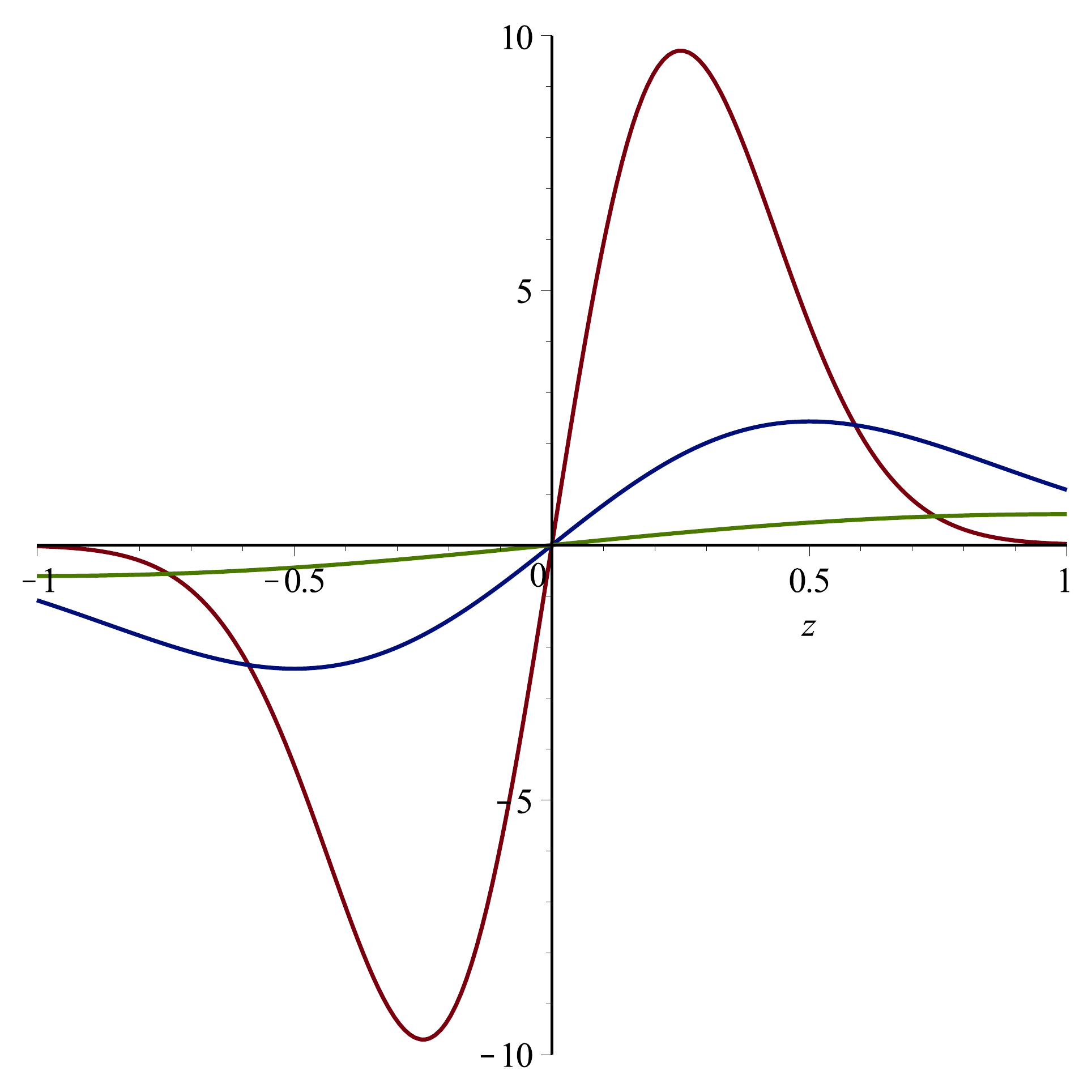}
\includegraphics[scale=0.35]{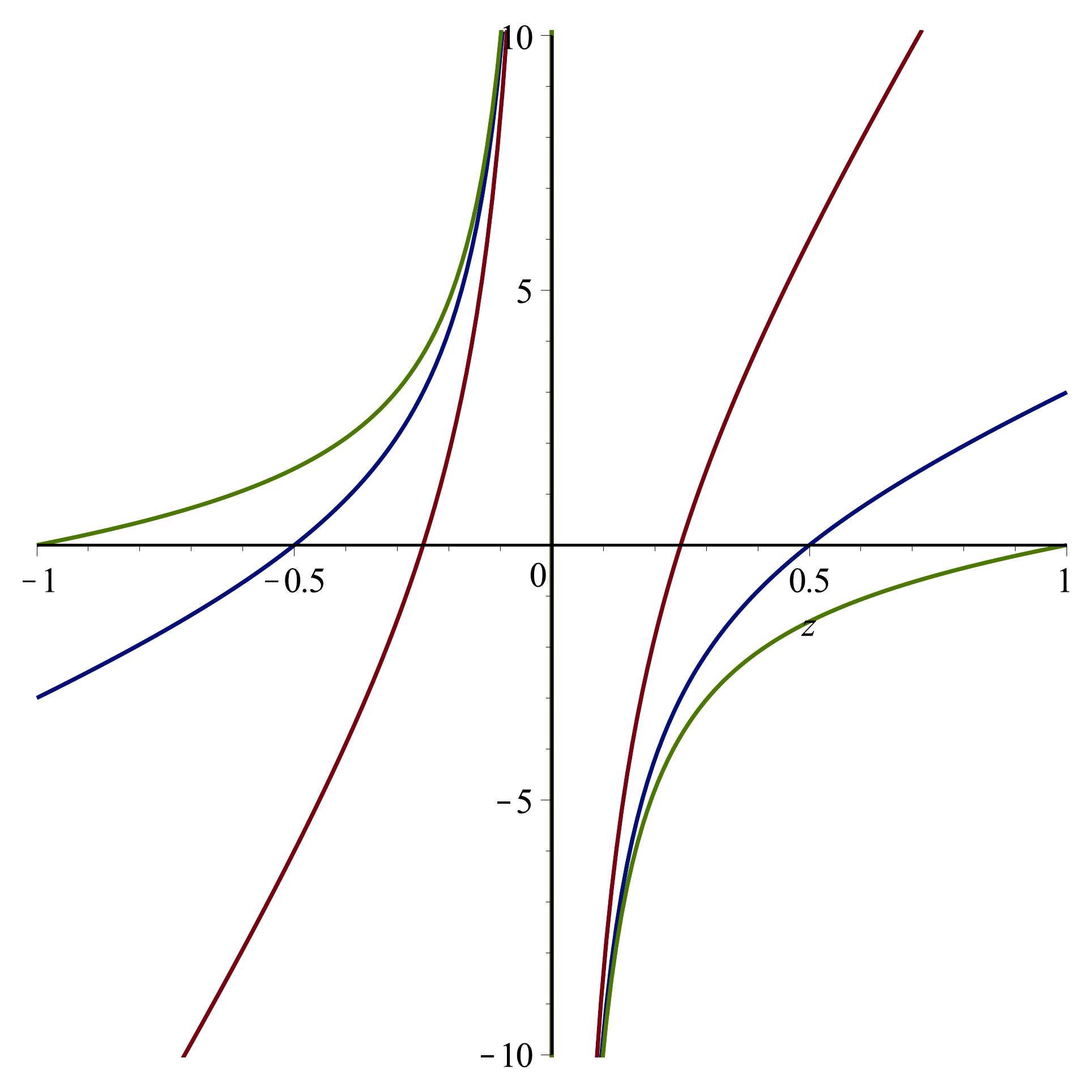}
\end{center}

\subsection{$1$-ansatz solutions} \label{sef} \text{}

According to theorem \ref{t2}, in this case the solutions of the heat equation in the $1$-ansatz have the form
\[
\psi(z,t) = e^{- \frac12 x_1(t) z^2 + r(t)} \Phi(z; x_2(t)),
\]
where  ${d \over d t} r(t) = - \bigg(\delta + \frac12\bigg) x_1(t)$,
we have $P_1 = 0$ due to the grading 
and system \eqref{rhds} yields
\[
{d \over d t} x_1 = x_2 - x_1^2, \qquad
{d \over d t} x_{2} = - 4 x_1 x_2.
\]
This system is equivalent to the equation $\mathcal{D}_2 = 0$ on $x_1(t) = h(t)$, that is
\begin{equation} \label{odd1}
	h'' + 6 h h' + 4 h^3 = 0.
\end{equation}
Differentiate \eqref{odd1} and put $y(t) = 2 h(t)$
to get Chazy-4 equation:
\[
	y''' = - 3 y y'' - 3 y'^2 - 3 y^2 y'.
\]
The general solution of \eqref{odd1} has the form
\begin{equation} \label{h2}
	h(t) = {1 \over 2} \left({\alpha_1 \over \alpha_1 t - \beta_1} + {\alpha_2 \over \alpha_2 t - \beta_2}\right), \quad (\alpha_1: \beta_1), (\alpha_2: \beta_2) \in \mathbb{C}P^1.
\end{equation}

\begin{thm}
	The general $1$-ansatz solution of the heat equation is a product of two entire functions:
\begin{equation} \label{n1}
	\psi(z,t) = \mathcal{G}(z,t) \Phi(z; x_2(t)).
\end{equation}
The explicit expressions for this functions are
\begin{align*}
\Phi(z; x_2(t)) &= {z}^{\delta} \sum_{m=0}^\infty \frac{ \, \Gamma({3\over4} + {\delta \over 2})}{m! \, \Gamma(m+ {3\over4} + {\delta \over 2})} (- 1)^m x_2(t)^m {\left(\frac{ z } {2}\right)}^{\!\!4 m}\!\!\!,
\\
	\mathcal{G}(z,t) &= \prod_{k=1,2 \atop \alpha_k \ne 0} \left({\alpha_k \over \alpha_k t - \beta_k}\right)^{ {1 + 2 \delta \over 4}}  \exp\left(  - {z^2 \over 4} \left({\alpha_1 \over \alpha_1 t - \beta_1} + {\alpha_2 \over \alpha_2 t - \beta_2}\right) + r_0 \right).
\end{align*}
Here
\[
	x_2(t) = - {1 \over 4} \left({\alpha_1 \over \alpha_1 t - \beta_1} - {\alpha_2 \over \alpha_2 t - \beta_2}\right)^2
\]
 $\delta = 1$ for an odd function and $\delta = 0$ for an even function, $(\alpha_k:\beta_k) \in \mathbb{C}P^1$ for $k = 1,2$ and $r_0 \in \mathbb{C}$ are constants.
\end{thm}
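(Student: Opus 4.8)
The plan is to specialize the recursion of Theorem~\ref{t2} to $n=1$, solve it in closed form, and then make the exponential prefactor explicit from the reduced heat dynamical system. For $n=1$ the sum $\sum_{k=2}^{n}$ is empty and $P_1=0$ by the grading, so the recursion degenerates to $\Phi_1=0$, $\Phi_2=-2(1+2\delta)x_2$ and $\Phi_q=\dfrac{(2q+\delta-3)(2q+\delta-2)}{2(1+2\delta)}\,\Phi_2\,\Phi_{q-2}$ for $q\geqslant3$. Substituting $\Phi_2$ turns this into $\Phi_q=-(2q+\delta-3)(2q+\delta-2)\,x_2\,\Phi_{q-2}$ for $q\geqslant3$; since $(1+\delta)(2+\delta)=2(1+2\delta)$ exactly when $\delta^2=\delta$, the same identity holds at $q=2$ with $\Phi_0=1$ as well. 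Hence $\Phi_q=-(2q+\delta-3)(2q+\delta-2)\,x_2\,\Phi_{q-2}$ for all $q\geqslant2$ with $\Phi_0=1$, $\Phi_1=0$; in particular all odd-indexed $\Phi_q$ vanish and only powers $z^{4m+\delta}$ survive in \eqref{form}.

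Next I would identify the resulting generating function. With $q=k+2$ the recursion reads $\Phi_{k+2}=-(2k+\delta+1)(2k+\delta+2)\,x_2\,\Phi_k$, which, term by term, is exactly the fourth-order linear ODE
\[
	\frac{d^4}{dz^4}\,\Phi + x_2\,(z\partial_z+1)(z\partial_z+2)\,\Phi = 0,\qquad \Phi=z^\delta+O(z^{4+\delta}),
\]
since $\tfrac{d^4}{dz^4}$ sends $\sum_k \Phi_k \tfrac{z^{2k+\delta}}{(2k+\delta)!}$ to $\sum_k \Phi_{k+2}\tfrac{z^{2k+\delta}}{(2k+\delta)!}$, while $(z\partial_z+1)(z\partial_z+2)$ acts on $\tfrac{z^m}{m!}$ as multiplication by $(m+1)(m+2)$. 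Writing $\Phi=z^\delta\,g(-x_2 z^4/16)$ reduces this to the confluent hypergeometric equation $w\,g''+(\tfrac34+\tfrac\delta2)\,g'-g=0$, whose solution regular at $w=0$ with $g(0)=1$ is $g={}_0F_1(;\tfrac34+\tfrac\delta2;w)$, which is precisely the stated series for $\Phi(z;x_2)$. Equivalently, and most economically for a write-up, one just verifies that the coefficients $\Phi_{2m}=(4m+\delta)!\,\dfrac{\Gamma(b)\,(-1)^m}{m!\,\Gamma(b+m)\,16^m}\,x_2^{\,m}$ with $b=\tfrac34+\tfrac\delta2$ satisfy the recursion: this collapses to the scalar identity $16m(b+m-1)=(4m+\delta)(4m+\delta-1)$, which holds for every $m$ iff $16b=8\delta+12$ and $\delta(\delta-1)=0$, i.e.\ iff $b=\tfrac34+\tfrac\delta2$ and $\delta\in\{0,1\}$. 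Since ${}_0F_1$ is entire, $\Phi(z;x_2)$ is entire in $z$.

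It then remains to make $\mathcal{G}$ explicit. By Theorem~\ref{t2} and Lemma~\ref{l1}, in the $1$-ansatz $x_1(t)=h(t)$ solves \eqref{odd1} with general solution \eqref{h2}, and $x_2(t)$ is recovered from the first equation of \eqref{rhds} as $x_2=\dot h+h^2$. Abbreviating $a_k:=\alpha_k/(\alpha_k t-\beta_k)$, so that $\dot a_k=-a_k^2$, a one-line computation gives $x_2=\dot h+h^2=\tfrac14(a_1+a_2)^2-\tfrac12(a_1^2+a_2^2)=-\tfrac14(a_1-a_2)^2$, which is the asserted $x_2(t)$. For $\mathcal{G}(z,t)=e^{-\frac12 h z^2+r}$, the quadratic term equals $-\tfrac{z^2}{4}(a_1+a_2)$, and integrating $\dot r=-(\delta+\tfrac12)h=-\tfrac{2\delta+1}{4}(a_1+a_2)$ gives $r=r_0-\tfrac{2\delta+1}{4}\sum_{\alpha_k\neq0}\ln(\alpha_k t-\beta_k)$, so that $e^{r}$, after absorbing constant factors into $r_0$, equals $e^{r_0}\prod_{\alpha_k\neq0}\bigl(\alpha_k/(\alpha_k t-\beta_k)\bigr)^{(1+2\delta)/4}$; together with the exponent this is the displayed $\mathcal{G}$, which is manifestly entire in $z$. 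Combining the two factors yields the claimed product form of $\psi$.

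The step I expect to need the most care is the passage from the two-step coefficient recursion to the hypergeometric description: one must notice that the $q=2$ initial datum is consistent with the $q\geqslant3$ pattern — which works only because $\delta\in\{0,1\}$, so $\delta^2=\delta$ — and one must pin down the parameter $b=\tfrac34+\tfrac\delta2$ correctly. The reduction of \eqref{rhds} to \eqref{odd1}, the computation of $x_2(t)$, and the integration giving $\mathcal{G}$ are all routine.
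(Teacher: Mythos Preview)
Your proposal is correct and follows essentially the same route as the paper: specialize Theorem~\ref{t2} (equivalently Theorem~\ref{t1}) to $n=1$, observe that $P_1=0$ and the differential part of the recursion vanishes, solve the resulting two-step recursion for $\Phi_k$, and then compute $x_2(t)$ and $\mathcal{G}(z,t)$ from the reduced heat system and the general solution \eqref{h2}. The paper's own proof simply states the recursion and asserts that the displayed series solves it; you supply the missing verification (the identity $16m(b+m-1)=(4m+\delta)(4m+\delta-1)$ with $b=\tfrac34+\tfrac\delta2$) and additionally identify $\Phi$ as a ${}_0F_1$ via the fourth-order ODE --- this is extra detail, not a different argument.
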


\begin{proof}
According to Theorem \ref{t1}, the function $\Phi(z; x_2)$ has the form \eqref{form} where $\Phi_k$ are polynomials of $x_2$ determined recurrently by the relations
\begin{align*}
\Phi_2 &= - 2 (1 + 2 \delta) x_2, \qquad  \Phi_{q} = 0, \quad q = 1,3,5, \dots \\
\Phi_{q} &=  {(2 q + \delta - 3) (2 q + \delta - 2) \over 2 (1 + 2 \delta)}\,\Phi_2 \Phi_{q-2}, \qquad q = 4, 6, 8, \dots
\end{align*}
The solution to this recursion is stated above. Expressing $x_2(t)$ and $r(t)$ in terms of $h(t)$ given by \eqref{h2}, we get the statement of the theorem.
\end{proof}

The following images represent solution \eqref{n1} in the case of different parameters:\\
1) $\delta = 0$, $\alpha_1 = 1$, $\beta_1 = 0$, $\alpha_2 = 0$, $r_0 = 0$ for $t = {1 \over 16}$ (red), $t = {1 \over 4}$ (green), $t = 1$ (blue).\\
In this case $\psi(0,t) =t^{ -{1 \over 4}}$.

2) $\delta = 0$, $\beta_1 = 0$, $\alpha_2 = \beta_2$, $r_0 = 0$, for $t = {1 \over 25}$ (red), $t = {1 \over 2}$ (yellow), $t = {99 \over 100}$ (green), $t = {101 \over 100}$ (blue), $t = 2$ (violet).\\
In this case $\psi(0,t) =t^{- {1 \over 4}}  \left(t - 1\right)^{- {1 \over 4}}$.

\begin{center}
	\includegraphics[scale=0.35]{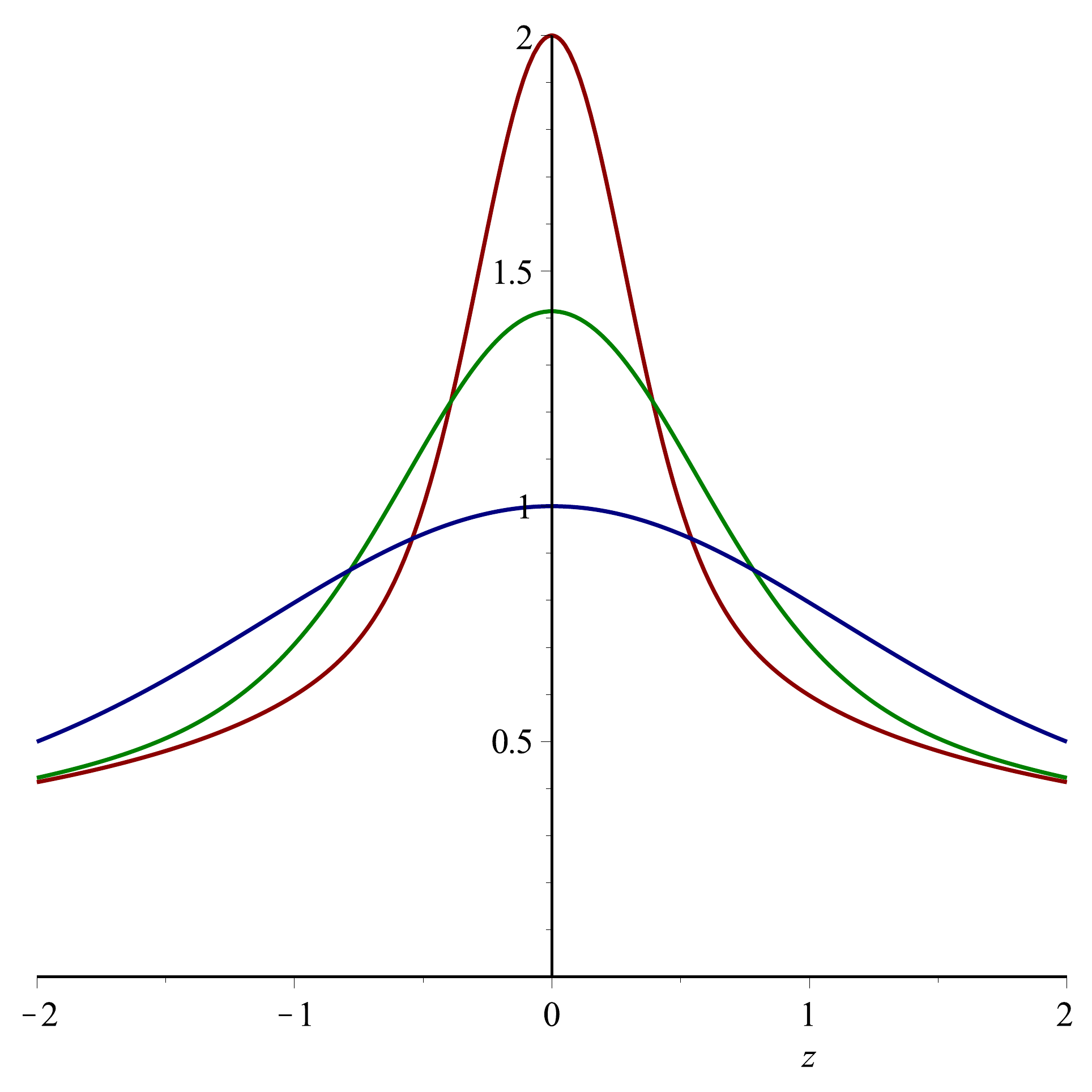} 
	\includegraphics[scale=0.35]{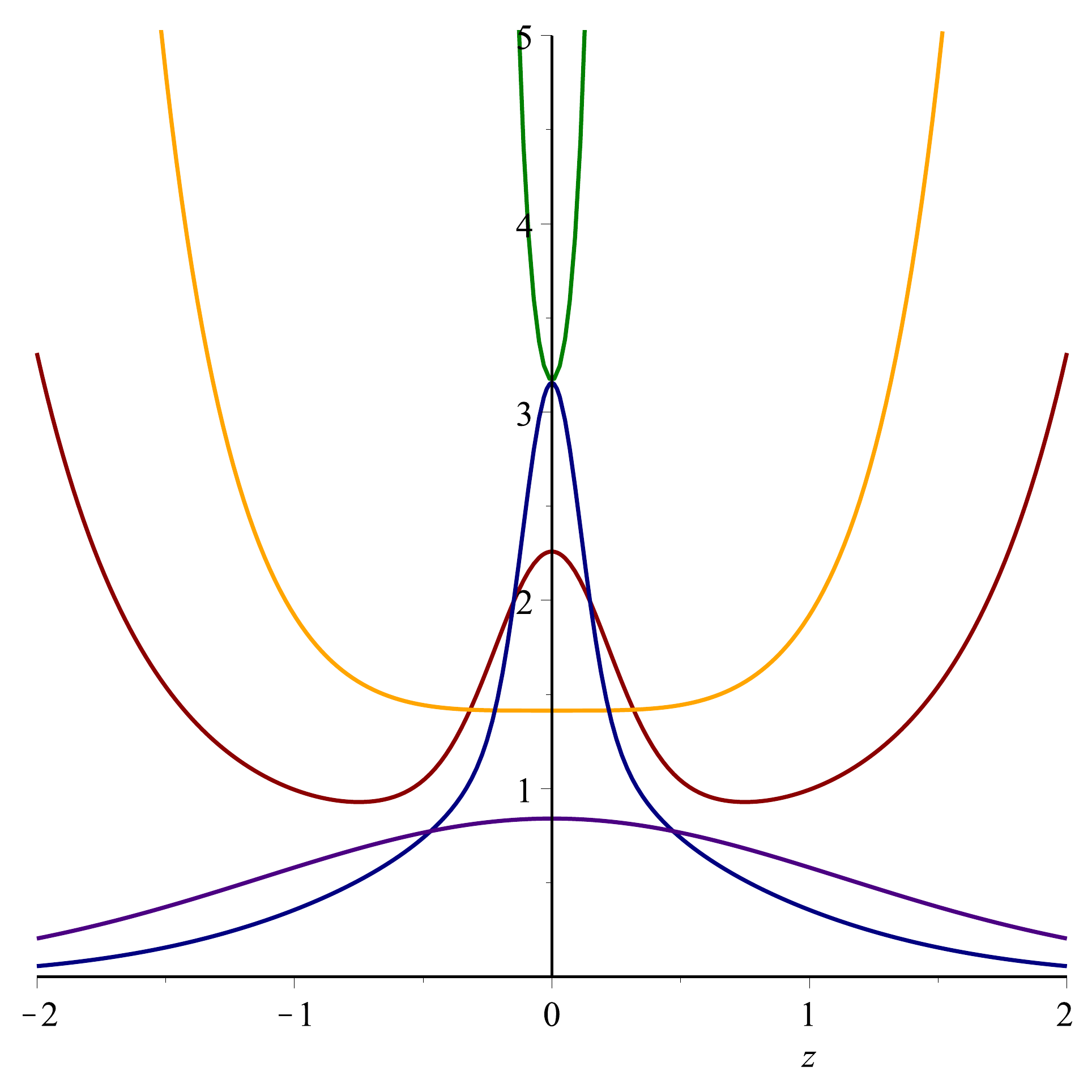}
\end{center}

\subsection{$2$-ansatz solutions} \text{}

For a full description of the case $n=2$ see \cite{FA}. Note that this case appear heat equation solutions in trems of elliptic functions.

\bibliographystyle{amsplain}

\end{document}